\newcommand{\ind}{{\rm I\hspace{-2.3mm} 1}}
\theoremstyle{plain}
\newtheorem{proposition}{Proposition}
\newtheorem{condition}{Condition}
\begin{document}
\title{Cox reduction and confidence sets of models: a theoretical elucidation}
\author{Rebecca M. Lewis\footnote{Corresponding author, rml15@ic.ac.uk, Supplementary Material available on request.}~ and Heather S. Battey}
\affil{Imperial College London}
\date{}

\maketitle

\begin{abstract}
For sparse high-dimensional regression problems, Cox and Battey \cite{battey2018,battey2017} emphasised the need for confidence sets of models: an enumeration of those small sets of variables that fit the data equivalently well in a suitable statistical sense. This is to be contrasted with the single model returned by penalised regression procedures, effective for prediction but potentially misleading for subject-matter understanding. The proposed construction of such sets relied on preliminary reduction of the full set of variables. While various possibilities could be considered for this, \cite{battey2017} proposed a succession of regression fits based on incomplete block designs. The purpose of the present paper is to provide insight on both aspects of \cite{battey2017}. For an unspecified reduction strategy, we begin by characterising models that are likely to be retained in the model confidence set, emphasising geometric aspects. We then evaluate possible reduction schemes based on penalised regression or marginal screening, before theoretically elucidating the reduction of \cite{battey2017}. We identify features of the covariate matrix that may reduce its efficacy, and indicate improvements to the original proposal. An advantage of the approach is its ability to reveal its own stability or fragility for the data at hand.
\end{abstract}


\section{Introduction}\label{secIntro}

In the context of regression with a large number $p$ of potential explanatory features on $n\ll p$ independent individuals, usual practice is to identify a single low-dimensional model. Motivated by scientific studies in which subject-matter understanding is sought rather than immediate predictive success, \cite{battey2017} argued that when several reasonable explanations are statistically indistinguishable, one should aim to specify a confidence set of models. The arguments for this are in our view compelling, yet there may be some flexibility in how the broad goal is operationalised. The following approach was suggested in  \cite{battey2017}. 

A \textit{reduction phase} starts with the full set of $p$ variables and aims to identify a much smaller set $\hat{\mathcal{S}}\subset \{1,\dots,p\}$ indexing variables with apparent explanatory power. The set $\hat{\mathcal{S}}$ is called the \emph{comprehensive model}. Following this reduction, the \textit{model assessment phase} assesses all sub-models of $\hat{\mathcal{S}}$ for their compatibility with the data, having taken necessary measures to prevent miscalibration due to double use of the data. The \emph{confidence set of models} $\mathcal{M}$ is an enumeration of those small sets of variables that fit the data equivalently well in a suitable statistical sense. 

The reduction phase as outlined in \cite{battey2017} is termed \emph{Cox reduction} throughout the present paper and proceeds as follows.

\begin{itemize}
	\item Arrange the $p$ variable indices at random either in a $k \times  k$ square or a $k \times  k \times k$ cube, where preferably $k \leq 15$. Extensions to four or more dimensions are possible. There is no loss of generality in assuming that $p$ is a perfect cube, as otherwise some positions are left unoccupied. A visualisation is in Figure S1 
	of the supplementary material \cite{lewisSupp}.
	\item Any intrinsic biological or physiological characteristics, known to be important, are either used to partition the sample or are included in all regressions and should not be arranged in the cube.
	\item Traverse the cube from its three directions, the rows, columns and ``corridors'', specifying $3k^{2}$ sets of $k$ variables. Fit a least squares regression to each set.
	\item From each regression, provisionally select a small number of variables. This might be the two variables with most significant effect, or all those variables, if any, whose Student $t$ statistics exceed a threshold. Cox and Battey \cite{battey2017} recommended that variables never selected or selected only once be discarded, except in the absence of strong prior counter evidence.
	\item The next step depends on the number of variables remaining. If this remains large (in excess of 30, say), a second phase is carried out, similar to the first, typically based on a square rather than a cube. The resulting set of variables from this two-stage reduction is $\hat{\mathcal{S}}$.
\end{itemize}
A form of this procedure is presented in pseudo-code in Section S1 
of the supplementary material \cite{lewisSupp}. 

The arrangement of variables into $3k^{2}$ distinct blocks of size $k$ is the partially balanced incomplete block design of Yates \cite{yates1936} used in a design setting where the number of treatments exceeds the number of experimental units per block. The context of Cox reduction is fundamentally different in that there is no blocking of units and no randomisation of treatments to units. The analogy is that the design aims to minimise loss or redundancy of information. As contrasted with, say, fitting $3k^2$ regressions on totally random sets of $k$ variables, Cox reduction ensures that every variable is assessed an equal number times, and never alongside the same set of variables. A practical advantage of Cox reduction is the reassurance it offers, through rerandomisation of the variable indices in the cube, over the security or otherwise of the conclusions. In particular, unstable sets $\hat{\mathcal{S}}$ point to fragility of the method on the data at hand and thereby guide the choice of tuning parameters. 

While inspiration came from \cite{yates1936}, motivation for Cox reduction is from Bradford Hill's \cite{BradfordHill1965} discussion of the circumstances under which an effect obtained in an observational study is relatively likely to have a causal interpretation.  Such conditions include that the effect is reproduced in independent studies and behaves appropriately when the potential cause is applied, removed and then reinstated \cite{cox1992}. See \cite[][p.165--6]{CoxDonnelly} for further discussion. These aspects point to Cox reduction as at least a broadly appropriate approach for an initial phase of analysis, to be considered alongside some more established procedures such as marginal screening or penalised regression. We will show in Section \ref{sec:confSet} that the set of variables retained by Cox reduction is better suited than some other reduction strategies for assessment of models in the second phase. A further motivation for theoretical elucidation is that the procedure appears to be in use, as indicated by the download logs for the R package \texttt{HCmodelSets} \cite{hhh}. Some published applications of Cox reduction include \cite{rubin}, \cite{CK3}, \cite{CK2}, \cite{CK1}.

Among the most relevant procedures proposed elsewhere in the high-dimensional literature is the multi-environment knockoff filter discussed by \cite{li2021}, although this generally comes with a different type of theoretical guarantee. The confidence set of \cite{battey2017} identifies a number of models that are by definition statistically indistinguishable, whereas the knockoff procedure returns a single model with a control on the false discovery rate \cite{barber2015, candes2018}.

In a low-dimensional context, confidence sets of models have been emphasised repeatedly by Cox (see e.g., \cite{cox1968}, \cite{CoxSnell1974}, \cite[Appendix A.2.5]{CoxSnell1989}). They have been considered from a different perspective in \cite{hansen2011}, which begins with a collection of candidate models and aims to identify those with the smallest expected loss at a given significance level. In \cite{hansen2011}, the relative explanatory power of pairs of models in the candidate set are compared, whereas \cite{battey2017} uses the comprehensive model as a reference set against which to gauge the adequacy of each submodel. If no submodel achieves comparable fit, the model confidence set of \cite{battey2017} is empty, whereas the set of \cite{hansen2011} contains several models with equally poor fit. Bayesian approaches to model selection and the resulting credible sets of models are in the same vein, with obvious operational differences. For example, the posterior distributions in \cite{mitchell1988} and \cite{george1993} assign non-zero weight to multiple models.

The purpose of the present paper is to provide theoretical insight into the construction of the model confidence set proposed by \cite{battey2017}, revealing situations that lead the procedure to partially fail, either by discarding a genuinely relevant variable in the reduction phase, or by discarding the true model at the model assessment phase. Several modifications to the original proposal are possible, of which some were mentioned in \cite{battey2018} but others are new to the present paper, having emerged from theoretical analyses contained herein. 

A presentational quirk of the paper is that a theoretical analysis of the model assessment phase of the procedure, relating to the model confidence set $\mathcal{M}$, is discussed first (Section \ref{sec:confSet}) conditional on a comprehensive model $\hat{\mathcal{S}}$ having already been isolated. This is to emphasise the possibility of using other reduction strategies besides Cox reduction for the construction of $\hat{\mathcal{S}}$. Such alternatives are discussed in Sections \ref{posReduction}. An analysis of Cox reduction is presented in Section \ref{coxReduction}.

Two sources of randomness will be used in the forthcoming analysis. When considering the model assessment phase (Section \ref{sec:confSet}), the randomness is induced through the generative model for the outcome. For the theoretical analysis of Cox reduction (Section \ref{coxReduction}), the outcome is treated as fixed and the only source of randomness comes from the arrangement of variables in the cube. In Section \ref{coxReduction}, it is convenient to think of the outcome $Y \in \mathbb{R}^n$ as generated according to the linear model $Y=X\theta^0+\epsilon$ where $X\in \mathbb{R}^{n \times p}$ is a design matrix, $\theta^0 \in \mathbb{R}^p$ is a sparse vector satisfying $\|\theta^0\|_0=s \ll p$ and $\epsilon$ consists of independent and identically distributed random entries with mean zero and variance $\sigma^2$. Such a modelling assumption, while convenient for sharpening terminology and notation, plays a negligible role in results of Section \ref{coxReduction}. A referee indicated p.\,106 of Scheff{\'e} \cite{scheffe1959}, which emphasises the role of randomisation to ensure statistical inferences from a notional normal-theory linear model are fair approximations under more realistic generative models. See also the comments on binary outcomes in Section \ref{secBinary}.

\section{A simple explicit example}

The following explicit example helps motivate the confidence set of models perspective. Data were generated according to the the model
\begin{eqnarray}\label{toy_example}Y=X\theta^0+\epsilon, \quad \quad \epsilon \sim N_n(\mathbf{0}_n, \mathbb{I}_{n\times n})\end{eqnarray}
where $\theta^0=(1,1,0,\dots,0)^T \in \mathbb{R}^p$ with $p=25$ and $n=100$. Since $p$ is relatively small, there is no need for the reduction phase; confidence sets can be constructed directly. Rows of $X$ were generated independently from a normal distribution of zero mean and covariance 
\[\Sigma=\begin{pmatrix}
1 & 0.99 & 0.99 & 0.5 &0.5 & \dots &0.5\\
0.99 & 1 & 0.99 & 0.5 &0.5 & \dots &0.5\\
0.99 & 0.99 & 1 & 0.5 &0.5 & \dots & 0.5\\
0.5 & 0.5 & 0.5 & 1 & 0.5 & \dots &0.5\\
\vdots & & &  & \ddots && \vdots \\
0.5&0.5&0.5&\dots&0.5&1&0.5\\
0.5&0.5&0.5&\dots&0.5&0.5&1\\
\end{pmatrix}.\]
This setting was constructed so that the covariates indexed by $\{1,2,3\}$ are statistically indistinguishable. A linear regression model fitted with a LASSO penalty by cross-validation  selected the single model indexed by covariate $\{2\}$. A likelihood ratio test of each model $\mathcal{S}_m \subseteq \{1,2,3\}$ against $[p]=\{1,\dots,p\}$ declared $\{2\}$, $\{1,2\}$, $\{2,3\}$ and $\{1,2,3\}$ as statistically indistinguishable from model $[p]$ at the $1\%$ level. These models constitute the confidence set $\mathcal{M}$, together with any other low-dimensional sub-models of $[p]$ not rejected at the same significance level. All four models produced similar predicted values (the correlations between their fitted values and those obtained from the full model were at least $0.97$). While for prediction the choice between well-fitting models is rather arbitrary, if understanding is the goal, an arbitrary choice among statistically indistinguishable models is a misleading portrayal of the information in the data, particularly if the different models have fundamentally different scientific interpretations. For more elaborate examples based on real data see Section \ref{simulations} and the empirical work in \cite{battey2017}. 

\section{Notation}\label{notation}
Let $X$ be an $(n\times p)$-dimensional matrix of covariate observations with rows $(x_i^{T})_{i=1}^{n}$ corresponding to $n$ observational units. The letter $i$ is reserved for indexing the transposed rows in this way, with other letters specifying columns. Let $Y=(Y_1,\ldots,Y_n)^T$ be the vector of outcomes. Define $\theta^0 \in \mathbb{R}^p$ to be the maximiser of an expected log likelihood function, whose sample version is
\[
\ell(\theta)=\textstyle{\sum_{i=1}^n} \ell_i(\theta, Y_i), \quad \quad \theta \in \mathbb{R}^p\]
where $\ell_i(\theta,Y_i)$ is a function of $x_i^T\theta$. Let $\mathcal{S}=\{j: \theta^0_j \neq 0\}$ be the set of signal variables and any indices not included in this set specify noise variables. We refer to $X\theta^0$ as the signal.

Upper-case and lower-case letters denote matrices and vectors respectively. With the context ensuring no ambiguity, capital letters are also used for sets. For a vector $u \in \mathbb{R}^p$, a matrix $X\in \mathbb{R}^{n \times p}$ and a set $\mathit{A} \subseteq \{1, \dots, p\}$, $u_\mathit{A}$ denotes the vector of entries of $u$ indexed by the set $\mathit{A}$. The columns of $X$ indexed by $\mathit{A}$ 
are written $X_\mathit{A}$ if $|\mathit{A}|>1$ and $x_\mathit{A}$  if $|\mathit{A}|=1$.  For any $a \in \mathit{A}$, $\mathit{A}_{-a}$ is the set $\mathit{A}\backslash\{a\}$ and $\mathit{A}^c$ its complement $\{1, \dots, p\}\backslash \mathit{A}$. Let $P_X=X(X^TX)^{-1}X^T$, i.e.~the projection matrix onto the column span of $X$. To avoid double subscripting we write $P_\mathit{A}=X_\mathit{A}(X_\mathit{A}^TX_\mathit{A})^{-1}X_\mathit{A}^T$. The vector $e_j$ for $j\in\mathbb{N}\backslash\{0\}$ denotes the $j$-th standard basis vector whose dimension will be clear from the context. Let $\|\cdot\|_0$, $\|\cdot\|_2$ and $\|\cdot\|_\infty$  be the $\ell_0$, $\ell_2$ and $\ell_\infty$ vector norms. When the argument is a matrix, $\|\cdot\|_2$ refers to the spectral norm and $\|\cdot\|_\infty$ to the infinity matrix norm given by the maximum absolute row sum of the matrix. The maximum eigenvalue of a square matrix $M$ is written $\lambda_{\text{max}}(M)$. The sub-Gaussian norm of a univariate random variable $Z$ is 
\[\|Z\|_{\psi_2}=\sup_{q \geq 1}q^{-1/2}(\mathbb{E}|Z|^q)^{1/q}.\]The sample correlation coefficient between two centred vectors $u$ and $v$ is written
 \[R(u,v)=\frac{u^Tv}{\|u\|_2 \|v\|_2}.\]
The sample multiple correlation coefficient between a centred vector $u$ and a matrix $X$ with centred columns is
\[R(u,X)=\max_{\alpha \in \mathbb{R}^p}\frac{u^TX\alpha}{\|u\|_2 \|X\alpha\|_2}=R(u, P_Xu).\] For two matrices $X_\mathit{A}$ and $X_B$ with centred columns,
\[R(X_\mathit{A}, X_\mathit{B})=\max_{\alpha, \beta}\frac{\alpha^TX_\mathit{A}^TX_\mathit{B}\beta}{\|X_\mathit{A}\alpha\|_2 \|X_\mathit{B}\beta\|_2}=\|P_\mathit{A}P_\mathit{B}\|_2,\]
where the final equality follows from the proof of Lemma S5 
in \cite{lewisSupp}.

\section{Model assessment phase}\label{sec:confSet}
\subsection{Framework}

Given a comprehensive model $\hat{\mathcal{S}}$, the model assessment phase enumerates all sub-models of $\hat{\mathcal{S}}$ that fit the data equivalently well. As suggested by \cite{battey2017}, this may be obtained by identifying all submodels of $\hat{\mathcal{S}}$ that are not rejected in a likelihood ratio test at a given significance level $\vartheta$.

For any $\mathcal{S}_m \subseteq \hat{\mathcal{S}}$, define $\mathcal{U}=\mathcal{S}_m^c \cap \hat{\mathcal{S}}$ and
\begin{eqnarray*}
\Theta_0^{(m)}&=& \{\theta \in \mathbb{R}^p:\,\theta_{\mathcal{S}_m^c}=0\},\\
\Theta_1^{(m)}&=& \{\theta \in \mathbb{R}^p:\,\theta_{\mathcal{U}} \neq 0, \, \theta_{\hat{\mathcal{S}}^c}=0\}.
\end{eqnarray*}
Let $\hat{r}$ and $r_m$ be the column ranks of $\widetilde{X}$ and $\widetilde{X}_m$, where $\widetilde{X}_m$ consists of the columns of $X$ indexed by $\mathcal{S}_m$. The likelihood ratio test of $H_0: \theta^0 \in \Theta_0^{(m)}$ against $H_1: \theta^0\in \Theta_1^{(m)}$ is
\begin{equation}\label{eqLRT}
\psi_m(Y,X)=\ind\{w(\mathcal{S}_m)\leq \chi^2_{\hat{r}-r_m}(1-\vartheta)\},
\end{equation}
where $\chi^2_{\hat{r}-r_m}(1-\vartheta) $ is the $1-\vartheta$ quantile of the $\chi^2$ distribution with $\hat{r}-r_m$ degrees of freedom and
\[
w(\mathcal{S}_m) = 2\left\{\sup_{\theta \in \Theta_1^{(m)}\cup \Theta_0^{(m)}}\ell(\theta) - \sup_{\theta \in \Theta_0^{(m)}}\ell(\theta)\right\}.
\]
A model $\mathcal{S}_m$ is included in the model confidence set $\mathcal{M}$ if $\psi_m(Y,X)=1$. To reduce the computational burden, it is reasonable to construct the confidence set of models by only testing submodels of $\hat{\mathcal{S}}$ of size less than $s^\#$ say, chosen independently of $n$. The uncheckable assumption $s^\# \geq s$ is needed, otherwise the true model is necessarily excluded. Algorithm 1 in Section S1 
of the supplementary material of \cite{lewisSupp} provides pseudo-code for the model assessment phase.

The likelihood ratio test has greater power to distinguish between some models than others. This section uncovers geometric features associated with models in the confidence set. This is considered under the assumptions that $\mathcal{S}\subseteq \hat{\mathcal{S}}$ where $\mathcal{S}$ is the set of signal variables, and $\hat{\mathcal{S}}$ and $\mathcal{M}$ are independent. It follows that $\mathcal{S}$ is contained in the model confidence set with probability converging to the chosen level $1-\vartheta$ as $n\rightarrow \infty$. We begin by describing the local behaviour of the test using Le Cam's formulation of asymptotic normality. We then extend the results by examining more general departures from the null hypothesis. 

For the rest of this section, let $\gamma^0$ be the entries of $\theta^0$ indexed by the comprehensive model $\hat{\mathcal{S}}$ and let $\widetilde{X}$ be the corresponding columns of $X$. Rearranging covariates if necessary, assume that $\gamma^0$ corresponds to the first few entries of $\theta^0$. The notation $\ell (\gamma^0, 0)$ is used to represent $\ell(\theta^0)$ when $\theta^0=(\gamma^0,0)$.

\subsection{Behaviour under contiguous alternatives}\label{contiguous}
Interpretation of the log-likelihood ratio test $\psi_m(Y,X)$ in general models is aided through appeal to Le Cam's formulation of local asymptotic normality \cite{LeCam1960}. Under standard parametric regularity conditions, notably that the statistical model is differentiable in quadratic mean at $\gamma^0$, \cite[Theorem~7.2]{vaart1998} gives the following local asymptotic expansion of the log likelihood, which holds under model $\hat{\mathcal{S}}$ or any submodel thereof:
\begin{equation}\label{eqLocalExp}
\ell(\gamma^0+h_n,0) - \ell(\gamma^0,0) = h^T Z_{n} - \frac{1}{2}h^T H_0 h + o_{p}(1).
\end{equation}
In equation \eqref{eqLocalExp}, $H_0$ is the Fisher information per observation, $Z_{n}=n^{-1/2}\nabla_\gamma \ell(\gamma^0,0)$ converges in distribution to a normal random vector of mean zero and covariance matrix $H_0$, and $n^{1/2}h_n \rightarrow h$. Sufficient conditions for differentiability in quadratic mean are given by \cite[Proposition~7.6]{vaart1998}. These are satisfied, for example, by most exponential family regression models. It follows from \eqref{eqLocalExp} that 
\[
\ell(\gamma^0+h_n,0) - \ell(\gamma^0,0) \stackrel{\gamma^0}{\longrightarrow} N\biggl(-\frac{1}{2}h^{T}H_0 h, h^{T}H_0 h\biggr),
\]
where $\stackrel{\gamma^0}{\longrightarrow}$ means converges weakly under the true distribution defined by $\theta^0=(\gamma^0,0)$.

Evans \cite{evans2020} generalised this result to allow two models, whose components are both in a $O(n^{-1/2})$ neighbourhood of the true distribution. This is a restriction to so-called contiguous alternatives, which in the context of equation \eqref{eqLRT} would imply that the model indexed by $\mathcal{S}_m$, if false, only omits variables whose associated signal strengths are dominated by the estimation error of the maximum likelihood estimator. This characterisation belongs to formal theory but indicates the directions from $\gamma^0$ in which the likelihood ratio test is expected to have relatively high or low power. As in \cite{evans2020}, introduce $h_n$ and $\widetilde{h}_n$ such that $h_n\rightarrow 0$, $\widetilde{h}_n\rightarrow 0$ and $n^{1/2}(h_n - \widetilde{h}_n)\rightarrow k$. Provided that the true distribution belongs to a model whose density with respect to an appropriate dominating measure is doubly differentiable in quadratic mean, in the sense of \cite[Definition~2.8]{evans2020}, then 
\begin{eqnarray}\label{eqLLR}
\nonumber	\ell(\gamma^0+h_n,0) - \ell(\gamma^0+\widetilde{h}_n,0) 	 &=& \frac{k^T}{\sqrt{n}}\nabla_\gamma \ell(\gamma^0+\widetilde{h}_n,0) - \frac{1}{2}k^T H_0 k + o_p(1) \\
	& \stackrel{\gamma^0}{\longrightarrow}&  N\biggl(-\frac{1}{2}k^{T}H_0 k, k^{T}H_0 k\biggr).
\end{eqnarray}
To interpret this result in general models, suppose that $k$ is a scalar multiple of a unit eigenvector $v$ of $H_0$ with associated eigenvalue $\lambda$, so that $k=a v$, and  $k^{T}H_0 k=(a\lambda)^2 v^{T}v = a^2\lambda^2$. Since eigenvectors of $H_0$ indicate orthogonal directions in which the Fisher information varies from highest to lowest, equation \eqref{eqLLR} shows that the false hypotheses in a local neighbourhood of $\gamma^0$ that are most likely to be excluded from the confidence set $\mathcal{M}$ are those for which $k$ coincides with directions of high curvature of the log likelihood function.

\subsection{Extensions to other alternatives}

The results in the previous section focus on models $\mathcal{S}_m$ where the unknown parameter $\gamma^0$ can be consistently estimated. Depending on the form of the log-likelihood, consistent estimation of $\gamma^0$ is not necessary for accurate characterisation of the log-likelihood $\ell(\gamma^0,0)$. Consider the normal theory linear model with covariate matrix $\widetilde{X}$ and known variance. Let $\widetilde{X}_m$ consist of the columns of $\widetilde{X}$ indexed by $\mathcal{S}_m$. Then,
\begin{eqnarray*}
w(\mathcal{S}_m)&=&\sigma^{-2}\|(P_{\widetilde{X}}-P_{\widetilde{X}_m})Y\|_2^2 \sim \chi^2_{\hat{r}-r_m}(\lambda),\\
 \lambda&=&\sigma^{-2}\|(I-P_m)\widetilde{X}\gamma^0\|_2^2
\end{eqnarray*}
when $\widetilde{X}$ has full rank and $P_m$ is the projection matrix onto $\widetilde{X}_m$. A model $\mathcal{S}_m$ is included in the model confidence set with probability converging to $1-\vartheta$ whenever $\|(I-P_m)\widetilde{X}\gamma^0\|_2$ converges to zero, that is, whenever the portion of the signal $\widetilde{X}\gamma^0$ that is orthogonal to the column span of $\widetilde{X}_m$ converges to zero. This result applies irrespective of how well $\gamma^0$ is estimated by $\mathcal{S}_m$.

To extend the results in the previous section, we consider likelihoods that depend on $\gamma$ only through $\widetilde{X}\gamma$ for $\gamma \in \mathbb{R}^{\hat{s}}$. The following result shows that models satisfying assumptions (a)-(d) below are included in the model confidence set with probability converging to $1-\vartheta$ as the sample size grows. 

\begin{proposition}\label{modConfSet}
Suppose the log-likelihood function $\ell(\gamma,0)$ depends on $\gamma$ only through $\eta=\widetilde{X}\gamma$ and write $\bar{\ell}(\eta)=\ell(\gamma,0)$. Let $\eta^0=\widetilde{X}\gamma^0$ and suppose there exist unique maximisers
\begin{eqnarray*}
\hat{\eta}_m&=&\text{argmax}_{\eta \in \text{Col-Sp}(\widetilde{X}_m)}\bar{\ell}(\eta)\\
\hat{\eta}&=&\text{argmax}_{\eta \in \text{Col-Sp}(\widetilde{X})}\bar{\ell}(\eta).
\end{eqnarray*}
Assume the following:
\begin{enumerate}
\item[(a)] Weak omitted signal: $\|(I-P_m)\eta^0\|_2=o(n^{-1/2})$,
\item[(b)] Predictive consistency under both models: $\|\hat{\eta}-\eta^0\|_3^3=O_P(n^{-1/2})$ and $\|\hat{\eta}_m-\eta^0\|_3^3=O_P(n^{-1/2})$.
\item[(c)] A local asymptotic expansion: for $M_n=o(1)$,
\begin{eqnarray*}
\sup_{ h \in B_n}|\bar{\ell}(\eta^0+h)-f(\eta^0, h)|&=&o_P(1)
\end{eqnarray*}
where $B_n=\{h \in \mathbb{R}^n: \|h\|_3\leq M_n\}$ and
\begin{eqnarray*}
f(\eta^0,h)&=&\bar{\ell}(\eta^0)+h^T\nabla_{\eta} \bar{\ell}(\eta^0)+\frac{1}{2}h^T\nabla_{\eta \eta}^2\bar{\ell}(\eta^0)h.
\end{eqnarray*}
 Further, $\|n^{-1/2}\nabla_{\eta} \bar{\ell}(\eta^0)\|_2$ and $\|\nabla_{\eta \eta}^2\bar{\ell}(\eta^0)\|_2$ are $O_P(1)$. 
\item[(d)] Asymptotic normality of the score function: there exists a matrix $\widetilde{Q}$ of full rank whose columns span the column space of $\widetilde{X}$ and the first $r_m$ columns, denoted $\widetilde{Q}_m$, span the column space of $\widetilde{X}_m$ where
\begin{itemize}
\item  $\max_{i=1}^n\|\widetilde{q}
_i\|_2=O(1)$ where $\widetilde{q}_i^T$ denotes the $i$-th row of $\widetilde{Q}$,
\item  the eigenvalues of $\widetilde{Q^T}\widetilde{Q}/n$ are asymptotically bounded above and away from zero,
\item the following asymptotic limits hold
\[ (\widetilde{Q}^TJ^0\widetilde{Q})^{-1/2}\widetilde{Q}^TU^0\overset{d}{\longrightarrow}N(0, \mathbb{I}),\]
\[ (\widetilde{Q}_m^TJ^0\widetilde{Q}_m)^{-1/2}\widetilde{Q}_m^TU^0\overset{d}{\longrightarrow}N(0, \mathbb{I}),\]
 with $U^0=\nabla_{\eta} \bar{\ell}(\eta^0)$ and $J^0=-\nabla_{\eta \eta}^2\bar{\ell}(\eta^0).$
\end{itemize}
\end{enumerate}
Then,
\[\mathbb{P}\{\psi_m(Y,X)=1\}\rightarrow 1-\vartheta\]
as $n \rightarrow \infty$. 
\end{proposition}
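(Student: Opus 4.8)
The plan is to show that the likelihood ratio statistic $w(\mathcal{S}_m)$ converges in distribution to $\chi^2_{\hat{r}-r_m}$; the claim then follows from \eqref{eqLRT} and continuity of the $\chi^2_{\hat{r}-r_m}$ distribution function at its $(1-\vartheta)$-quantile. First I would express $w(\mathcal{S}_m)$ through $\bar{\ell}$. Since $\Theta_0^{(m)}\cup\Theta_1^{(m)}$ consists of the parameters supported on $\hat{\mathcal{S}}$ and $\Theta_0^{(m)}$ of those supported on $\mathcal{S}_m$, and $\ell$ depends on $\gamma$ only through $\eta=\widetilde{X}\gamma$, the two suprema in $w(\mathcal{S}_m)$ equal $\bar{\ell}(\hat{\eta})$ and $\bar{\ell}(\hat{\eta}_m)$, so that $w(\mathcal{S}_m)=2\{\bar{\ell}(\hat{\eta})-\bar{\ell}(\hat{\eta}_m)\}$. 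To accommodate possible rank deficiency of $\widetilde{X}$, I would work in the full-rank reparametrisation of assumption (d): each $\eta\in\text{Col-Sp}(\widetilde{X})$ equals $\widetilde{Q}\beta$ for a unique $\beta\in\mathbb{R}^{\hat{r}}$, and the fact that $\widetilde{Q}_m$ comprises the first $r_m$ columns of $\widetilde{Q}$ will be used decisively at the end.

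The heart of the argument is a two-sided Wilks expansion performed in $\eta$-space. With $U^0=\nabla_\eta\bar{\ell}(\eta^0)$, $J^0=-\nabla^2_{\eta\eta}\bar{\ell}(\eta^0)$, $G=\widetilde{Q}^TJ^0\widetilde{Q}$ and $G_m=\widetilde{Q}_m^TJ^0\widetilde{Q}_m$, I would invoke the uniform expansion (c) to replace $\bar{\ell}$ by the quadratic $f(\eta^0,\cdot)$ on the shrinking $\ell_3$-ball $B_n$. The maximiser of $\bar{\ell}$ over $\text{Col-Sp}(\widetilde{X})$ is $\hat{\eta}$, whose displacement $\hat{\eta}-\eta^0$ lies in $B_n$ with probability tending to one by the predictive consistency (b); the maximiser of the quadratic is $\eta^0+\widetilde{Q}G^{-1}\widetilde{Q}^TU^0$, and $\max_i\|\widetilde{q}_i\|_2=O(1)$ with $G^{-1/2}\widetilde{Q}^TU^0=O_P(1)$ (both from (d)) show that its displacement from $\eta^0$ also has $\ell_3$-norm $O_P(n^{-1/6})$ and hence lies in $B_n$. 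Equating the two maxima up to $o_P(1)$ gives $2\{\bar{\ell}(\hat{\eta})-\bar{\ell}(\eta^0)\}=(\widetilde{Q}^TU^0)^TG^{-1}(\widetilde{Q}^TU^0)+o_P(1)=:\|W\|_2^2+o_P(1)$, where $W=G^{-1/2}\widetilde{Q}^TU^0$.

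The delicate step, which I expect to be the main obstacle, is the submodel term, because $\eta^0$ need not belong to $\text{Col-Sp}(\widetilde{X}_m)$: maximising $f(\eta^0,\cdot)$ over that subspace is an \emph{affine} problem with offset $\delta=(I-P_m)\eta^0$. Completing the square in the submodel coordinates and tracking the $\delta$-dependent remainders, the only terms capable of surviving are $-2\delta^TU^0$, $-\delta^TJ^0\delta$ and two cross terms built from $\widetilde{Q}_m^TJ^0\delta$; crucially, the one genuinely threatening contribution, proportional to $(P_m\eta^0)^TJ^0\delta$ and a priori of order $\|\eta^0\|_2$, cancels exactly against a matching term from $-(\eta^0)^TJ^0\eta^0$. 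The survivors are then bounded by Cauchy--Schwarz, each being $o_P(1)$, using $\|\delta\|_2=o(n^{-1/2})$ from the weak-omitted-signal assumption (a) together with $\|U^0\|_2=O_P(n^{1/2})$, $\|J^0\|_2=O_P(1)$ from (c) and the order-$n$ lower bound on the eigenvalues of $G_m$ implied by (d). This is exactly where (a) does its work: it prevents any noncentrality from surviving, keeping the limiting law central. Hence $2\{\bar{\ell}(\hat{\eta}_m)-\bar{\ell}(\eta^0)\}=(\widetilde{Q}_m^TU^0)^TG_m^{-1}(\widetilde{Q}_m^TU^0)+o_P(1)=:\|W_m\|_2^2+o_P(1)$, and therefore $w(\mathcal{S}_m)=\|W\|_2^2-\|W_m\|_2^2+o_P(1)$.

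Finally I would identify the law of the difference. Because $\widetilde{Q}_m$ is the leading block of columns of $\widetilde{Q}$, both quadratic forms are functions of the single score vector $S=\widetilde{Q}^TU^0$, with $\widetilde{Q}_m^TU^0$ its first $r_m$ coordinates and $G_m$ the leading principal block of $G$; by (d), $G^{-1/2}S\overset{d}{\to}N(0,\mathbb{I}_{\hat{r}})$. The partitioned-quadratic-form identity yields $\|W\|_2^2-\|W_m\|_2^2=\widetilde{S}_2^TG_{22\cdot1}^{-1}\widetilde{S}_2$, where $G_{22\cdot1}$ is the Schur complement of the leading block of $G$ and $\widetilde{S}_2=S_2-G_{21}G_{11}^{-1}S_1$; a direct computation shows $G_{22\cdot1}^{-1/2}\widetilde{S}_2=B_n(G^{-1/2}S)$ for a matrix $B_n$ with orthonormal rows, so the statistic equals $W^T\Pi_nW$ for a rank-$(\hat{r}-r_m)$ orthogonal projection $\Pi_n$ that is scale-invariant in, hence a continuous function of, the normalised information $\widetilde{Q}^TJ^0\widetilde{Q}/n$. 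The eigenvalue bounds in (d) keep this normalised information in a compact set, and since $Z^T\Pi Z\sim\chi^2_{\hat{r}-r_m}$ for every fixed rank-$(\hat{r}-r_m)$ projection $\Pi$ and $Z\sim N(0,\mathbb{I}_{\hat{r}})$, the limit is $\chi^2_{\hat{r}-r_m}$ irrespective of the particular sublimit of the normalised information. This gives $w(\mathcal{S}_m)\overset{d}{\to}\chi^2_{\hat{r}-r_m}$ and completes the proof.
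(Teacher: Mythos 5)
Your argument follows essentially the same route as the paper's proof: localise the two constrained maximisers to the $\ell_3$-ball $B_n$ via (b), replace $\bar{\ell}$ by the quadratic expansion (c) in the $\widetilde{Q}$-parametrisation, use (a) to show every remainder involving the offset $(I-P_m)\eta^0$ is $o_P(1)$, and identify the limiting difference of quadratic forms as a rank-$(\hat{r}-r_m)$ idempotent applied to an asymptotically standard normal score. The only divergences are cosmetic: you reach the projection through the partitioned (Schur-complement) quadratic-form identity where the paper computes the eigenvalues of $\Gamma=\mathbb{I}-\Sigma^{1/2}\mathrm{diag}(\Sigma_{mm}^{-1},0)\Sigma^{1/2}$ directly, and the ``exact cancellation'' you anticipate in the submodel term is not needed in the paper's parametrisation $h_0=\widetilde{Q}_m\alpha-(I-P_m)\eta^0$, where Cauchy--Schwarz and (a) alone dispose of all three offset terms.
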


The required $\|\cdot\|_3$-consistency appearing in assumptions (b) and (c) of Proposition \ref{modConfSet} is unusual but arises naturally when considering the asymptotic expansion of the log-likelihood as a function of $\eta=\widetilde{X}\gamma$ instead of $\gamma$. Let $h=\widetilde{X}v$ with $n^{1/2}\|v\|_2=O(1)$. 
Standard arguments show that under regularity conditions, 
\begin{eqnarray*}
\bar{\ell}(\eta^0+h)&=&\ell(\gamma^0+v,0)\\
&=&\ell(\gamma^0,0)+v^T\widetilde{X}^T\nabla_{\eta}\bar{\ell}(\eta^0)+\frac{1}{2}v^T\widetilde{X}^T\nabla_{\eta\eta}^2\bar{\ell}(\eta^0)\widetilde{X}v+o_P(1)\\
&=&\bar{\ell}(\eta^0)+h^T\nabla_{\eta}\bar{\ell}(\eta^0)+\frac{1}{2}h^T\nabla_{\eta\eta}^2\bar{\ell}(\eta^0)h+o_P(1).\end{eqnarray*}
This is an asymptotic expansion of the form given in assumption (c) of Proposition \ref{modConfSet}. However, the condition $n^{1/2}\|v\|_2=O(1)$ is strong as there exist cases where $n^{1/2}\|v\|_2$ is unbounded and the asymptotic expansion remains valid. To avoid this, we obtain a similar expansion about $\eta^0$ by identifying conditions on $h$ for which
\[h^T\{\nabla_{\eta\eta}^2\bar{\ell}(\eta^0)-\nabla_{\eta\eta}^2 \bar{\ell}(\eta)\}h=o_P(1),\]
where with slight notational inaccuracy, $\eta$ lies on the line joining $\eta^0+h$ and $\eta^0$, and may differ in each entry of the matrix. In certain generalised linear models, this term is of order $\|h\|_3^3$ which motivates assumptions (b) and (c). For further details, see the proof of Lemma S10 
in the supplementary material of \cite{lewisSupp}.

Section S6 
of the supplementary material of \cite{lewisSupp} gives conditions under which the assumptions in Proposition \ref{modConfSet} are satisfied by canonical generalised linear models. These conditions are met by the linear and logistic regression models when $\mathcal{S}_m$ satisfies $\|(I-P_m)\widetilde{X}\gamma^0\|_2=o(n^{-1/2})$ and so those models that only exclude a small portion of the signal $\widetilde{X}\gamma^0$ will be included in the model confidence set. It is not necessary for $\widetilde{X}$ to be full rank; see the proof of Proposition \ref{modConfSet} for details. The expected size of the model confidence set is at least $M(1-\vartheta)$ where $M$ denotes the number of models $\mathcal{S}_m$ satisfying $\|(I-P_m)\widetilde{X}\gamma^0\|=o(n^{-1/2})$. If there are many sets of covariates that are highly correlated in sample with signal variables, then the model confidence set will contain a large number of models on average.

\section{Reduction phase: evaluation of possible strategies}\label{posReduction}
The construction of a model confidence set hinges on a preliminary reduction to a set $\hat{\mathcal{S}}$ of manageable size that contains the true model with probability converging to one as $n,p\rightarrow \infty$. Possible reduction strategies include penalised regression procedures such as the LASSO \cite{TibshiraniLasso1996} or marginal screening \cite{fan2008}. We briefly discuss some of the considerations involved, with an emphasis on the linear model. 

\subsection{Penalised regression}

Penalised regression performs variable selection by minimising the least-squares or negative log-likelihood function subject to a constraint on the magnitudes of the entries of the parameter vector. An estimate of $\theta^0$ is first obtained satisfying 
$\hat{\theta}^{(\lambda)}=\text{argmin}_{\theta \in \mathbb{R}^p} \{-\ell(\theta)+\lambda p(\theta)\},$
where $p(\theta)$ is typically of the form
\[p(\theta)=\sum_{j=1}^p p_j(|\theta_j|)\]
 and the set \begin{eqnarray}\label{lasso}\hat{\mathcal{S}}=\{j:\, \hat{\theta}^{(\lambda)}_j \neq 0\},\end{eqnarray}
 may be used to specify the comprehensive model. The tuning parameter $\lambda$ determines the size of the set $\hat{\mathcal{S}}$. The LASSO \cite{TibshiraniLasso1996}, SCAD \cite{fan2011} and MCP \cite{zhang2010} procedures arise from particular choices of $p(\theta)$.

As $p(\theta)$ is typically non-decreasing in the magnitudes $|\theta_j|$ for  $j \in \{1, \dots, p\}$, the comprehensive model obtained through penalised regression will rarely detect signal variables that are weakly correlated with the response variable. Decreasing the size of the tuning parameter introduces further covariates, however those that are correlated with the error term will be prioritised over those variables with a weak signal, resulting in an overfitted model. Cox reduction aims to avoid overfitting by performing many low-dimensional regressions and retaining only those variables that are consistently statistically significant at a chosen level. 

Penalised regression may also fail to select all signal variables even when they are highly correlated with the response variable. This occurrence is explained in \cite{efron2004} for the linear model with LASSO solution $\hat{\theta}^{(\lambda)}$ obtained using the LARS algorithm. Briefly, given a current predicted response, the algorithm sequentially adds the covariate that is most highly correlated with the residuals to update the predicted response. The sequence of solutions obtained at each step of the algorithm corresponds to the solutions of the LASSO problem for decreasing $\lambda$ \cite[Theorem~1]{efron2004}. When two covariates are highly but not perfectly correlated in sample, including one of them in a LARS step reduces the correlation between the other and the residual, thus making it difficult for the second variable to enter the model. This can lead the LASSO to select unimportant noise variables over signal variables. The simulation results summarised at the start of Section \ref{simulations} provide examples of this, showing that when signal strengths are weak and correlations among covariates are high, the comprehensive model determined by an undertuned LASSO often omits signal variables.  

The situation is in principle less problematic when there are perfectly correlated signal variables. If a LASSO solution includes at least one of the perfectly correlated signal variables, then there must exist another solution that includes all of these variables. This applies to general loss functions and other penalty functions too. Unfortunately, commonly used optimisation algorithms such as coordinate descent \cite{friedman2010, breheny2011} only return  one arbitrarily chosen solution, leaving open the possibility that some signal variables are discarded.

For these reasons, an undertuned penalised regression procedure is not recommended for construction of the comprehensive model.

\subsection{Marginal screening}\label{secMS}

 For sets $\mathit{E}$ and $\mathit{F}$, let 
\begin{eqnarray*}
\hat{\theta}_{Y:\mathit{E}}&=& (X_\mathit{E}^TX_\mathit{E})^{-1}X_\mathit{E}^TY\\
\hat{\theta}_{\mathit{F}:\mathit{E}}&=&(X_\mathit{E}^TX_\mathit{E})^{-1}X_\mathit{E}^TX_F
\end{eqnarray*}
and $\hat{\theta}_{Y:\mathit{E}.\mathit{F}}$ be the entries of $\hat{\theta}_{Y:\mathit{K}}$ corresponding to $\mathit{E}$ when $\mathit{K}=\mathit{E}\cup \mathit{F}$. The notation closely follows \cite{CoxWermuth1996}. Marginal screening \cite{fan2008} retains variables with the largest absolute marginal correlation with the outcome. To explore the limitations of marginal screening for construction of the comprehensive model, consider the  decomposition
	\begin{equation}\label{eqCochran}
\hat{\theta}_{Y:\mathit{E}}=\hat{\theta}_{Y:\mathit{E}.\mathit{F}}+(\hat{\theta}_{\mathit{F}:\mathit{E}})^T \hat{\theta}_{Y:\mathit{F}.\mathit{E}}.
\end{equation}
where $\mathit{E}, \mathit{F} \subset\{1, \dots,p\}$ with $|\mathit{E}|=1$. The result \eqref{eqCochran} was first noted in \cite{cochran1938} for sets $\mathit{F}$ of size one and has been used subsequently in \cite{cox1960, CoxWermuth2003}. A derivation for $|\mathit{F}|>1$ in the linear model is given in Section S4 
of the supplementary material of \cite{lewisSupp} (proof of Lemma S1). 
An asymptotic analogue of equation \eqref{eqCochran} for general regression models was given by \cite{CoxWermuth1990}. 
 It follows from \eqref{eqCochran} that $\hat{\theta}_{Y:\mathit{E}}=\hat{\theta}_{Y:\mathit{E}.\mathit{F}}$ if and only if either $\hat{\theta}_{\mathit{F}:\mathit{E}}=0$ or $\hat{\theta}_{Y:\mathit{F}.\mathit{E}}=0$ or the two vectors are orthogonal. In particular, if this condition is violated when $\mathit{E}$ indexes a signal variable and $\mathit{F}=\mathcal{S} \backslash \mathit{E}$, it is possible for the marginal effect to be inflated or diminished. This improves or curtails the ability of marginal screening to detect $\mathit{E}$. If there is total cancellation, i.e. $\hat{\theta}_{Y:\mathit{E}.\mathit{F}} \approx - (\hat{\theta}_{\mathit{F}:\mathit{E}})^T\hat{\theta}_{Y:\mathit{F}.\mathit{E}}$, marginal screening would be unable to detect the signal variable indexed by $\mathit{E}$.

The situation described above is also sometimes challenging for Cox reduction, as will become clear in Section \ref{coxReduction}. Differences between the two procedures are most apparent when there is partial cancellation, so that the signal variable indexed by $\mathit{E}$ is deemed ineffective in the relatively strong reduction effectuated by marginal screening, but survives the weaker first round of Cox reduction, giving it the opportunity to be assessed in the presence of other strong variables in the second round of reduction.

A more subtle point is that the noise variables retained by Cox reduction facilitate model discrimination at the model assessment phase to a greater extent than those retained by marginal screening. The explanation is that marginal screening retains covariates that are highly correlated in sample with $Y$, and so the variables included in the comprehensive model generally span a relatively low-dimensional subspace of $\mathbb{R}^n$. Since Cox reduction requires that any apparent effect is not explained away by the companion variables, the angles between $Y$ and the resulting $n$-dimensional vectors of observations on retained noise covariates need not be small. The implication is that $X_{\hat{\mathcal{S}}}$ typically spans a larger subspace of $\mathbb{R}^n$. In this way, Cox reduction is expected to identify a comprehensive model that fits the data better than that identified by marginal screening, making it harder for submodels to pass the likelihood ratio test. Furthermore, submodels of the comprehensive model obtained from marginal screening span a similar space to $X_{\hat{\mathcal{S}}}$ by construction, and so are unlikely to be rejected by a likelihood ratio test (see Section \ref{sec:confSet}).

While the advantages of Cox reduction over marginal screening in particular circumstances are intuitively clear based on \eqref{eqCochran}, there are also situations in which marginal screening outperforms Cox reduction. For this reason, Section \ref{coxReduction} studies a setting in which marginal screening would be an obvious candidate for the construction of $\hat{\mathcal{S}}$. This is to highlight limitations of Cox reduction and point to potential improvements. 
\section{Some theoretical analysis of Cox reduction}\label{coxReduction}
\subsection{Framework}\label{framework}
Cox reduction forms a comprehensive model by analysing variables alongside randomly selected sets of companion variables and retaining those whose apparent effects are not explained away by other variables. Some analysis of Cox reduction was provided in \cite{battey2018}, where the focus was on the comparison of decision rules, legitimising certain simplifying assumptions. Based on their results, we consider the following version of Cox reduction.
\begin{itemize}
\item In the first round of Cox reduction, variable indices are randomly arranged in a $k\times k \times k$ cube. Variables are retained for further analysis if they are among the two most significant in at least two out of the three regressions in which they appear. See Algorithm 2 in Section S1 
of the supplementary material \cite{lewisSupp}.
\item In the second round of Cox reduction, the indices corresponding to variables retained after the first round are randomly arranged in a $k\times k $ square. The comprehensive model indexes those variables that are significant at level $\alpha$ in at least one regression. See Algorithm 3 in Section S1 
of the supplementary material \cite{lewisSupp}.
\end{itemize}

The present section identifies potentially problematic situations that lead Cox reduction to retain noise variables over signal variables. For this analysis, we treat $Y$ and $X$ as fixed, with the only source of randomness coming from the arrangement of variable indices in the cube. Each of the regressions that are fitted are linear with variable significance determined by the magnitude of a Wald statistic. This is most natural when the generative model is of the form $Y=X\theta^0+\epsilon$, where $\epsilon$ consists of centred, independent entries with known variance $\sigma^2$. However, with the exception of Proposition \ref{lem:maxSpurCorr}, the conclusions are free of modelling assumptions. 

\subsection{Assumptions}
We make the following assumptions on the design matrix and response vector. This will allow the relative significance of variables to be approximated by more interpretable quantities.

\begin{condition}\label{ch2CRCond}
Suppose $Y$ and all columns of $X$ are centred, and $X_{\mathit{K}}$ is of full rank for all $\mathit{K}\subset\{1,\dots,p\}$ of cardinality $k$.
\end{condition}

\begin{condition}\label{ch2CRCond2}
Suppose the indices $\{1, \dots, p\}$ can be partitioned into disjoint sets $\mathcal{A}$ and $\mathcal{B}=\{1, \dots, p \} \backslash \mathcal{A}$ such that
\begin{eqnarray*}
\min_{\mathit{A}\subseteq \mathcal{A},\, |\mathit{A}|=1}\{R(Y, x_\mathit{A})\}^{-1}=O(1)\label{eq:marg}
\end{eqnarray*}
and the maximum sample correlation 
\begin{eqnarray*}
\Delta(a):=\max_{\substack{\mathit{A} \subseteq \mathcal{A},\,\mathit{B} \subset \mathcal{B},\\ |\mathit{A} \cup \mathit{B}|=k, |\mathit{A}|\leq a}}\max\{R^2(X_\mathit{A}, X_\mathit{B}), R^2(Y, X_\mathit{B})\}
\end{eqnarray*}
satisfies $\Delta(a)=o(1)$ when $a$ is of moderate size (to be specified where necessary). When $|\mathit{A}|=0$ or $|\mathit{B}|=0$, we let $R^2(X_\mathit{A},X_\mathit{B})=0$ or $R(Y, X_\mathit{B})=0$ so that $\Delta(a)$ is well-defined.
\end{condition}

Given our decision to treat $Y$ and $X$ as fixed, the terms $\Delta(a)$ or $R(Y,x_A)$ are, for present purposes, non-random quantities that vary as $n,p \rightarrow \infty$. Condition \ref{ch2CRCond} is minor and ensures the Wald statistics are well-defined and can be related to sample correlation coefficients as defined in Section \ref{notation}. To interpret and justify Condition \ref{ch2CRCond2}, it will be useful to temporarily view the rows of the design matrix as independent and identically distributed random quantities with $\mathcal{A}$ indexing signal variables and noise variables correlated in population with signal variables, and $\mathcal{B}$ indexing the remaining noise variables. Under suitable conditions on the random process, $R(Y,x_A)$ is a sample correlation between two quantities that are related at the population level, and so will likely be bounded away from zero asymptotically. In contrast, $R(X_\mathit{A}, X_\mathit{B})$ and $R(Y, X_\mathit{B})$ are sample correlations between unrelated random variables and so these are expected to decay to zero as $n \rightarrow \infty$ provided the cardinality of the sets $\mathit{A}$ and $\mathit{B}$ are of moderate size. This behaviour coincides with that required in Condition \ref{ch2CRCond2}. We focus on sets such that $|A\cup B|=k$, this being the block size in each traversal of the cube or square. The quantity $a$ in the definition of $\Delta(a)$ reflects a restriction on the number of variables in $\mathcal{A}$ appearing in each regression, and we will soon see that during the first round of Cox reduction, it is sufficient to consider $a=1$. Based on this interpretation, we refer to $\mathcal{A}$ as the set of \textit{pseudo-signal variables} and $\mathcal{B}$ as the set of \textit{pseudo-noise variables}.

Proposition \ref{lem:maxSpurCorr} provides a more precise justification for Condition \ref{ch2CRCond2} in the case $a=1$ when the entries of the design matrix and response vector are random observations from sub-Gaussian distributions. The result shows that $\Delta(1)=o_P(1)$ by making use of Theorem 3.1 in \cite{fan2018}. The behaviour of $\Delta(a)$ for $a>1$ was not considered by \cite{fan2018}.

\begin{proposition}\label{lem:maxSpurCorr} 
Assume the following:
\begin{itemize}
\item Each entry of $Y$ is an independent observation of a centred sub-Gaussian random variable,
\item For $a\in \mathcal{A}$, each entry of $x_a$ is an independent observation of a centred sub-Gaussian random variable.
\item Each row of $X_\mathcal{B}$ is an independent sample from the distribution of $U$ where $U\in \mathbb{R}^{|\mathcal{B}|}$ is a random vector with independent and centred entries satisfying 
\[\sup_{\alpha \in \mathbb{R}^{|\mathcal{B}|}: \|\alpha\|_2=1}\|\alpha^TU\|_{\psi_2}<\infty.\]
\item $X_{\mathcal{A}}$ and $Y$ are both independent of $X_{\mathcal{B}}$ (but possibly dependent on each other).
\end{itemize}
Then,
\[\Delta(1)=O_P\{\log(|\mathcal{B}|)/n\}=o_P(1)\]
as $p,n\rightarrow \infty$ with $k, k^{-1}=O(1)$, $\log(|\mathcal{B}|)=o(n)$, $|\mathcal{A}|=o\{ \log (|\mathcal{B}|)\}$ and $|\mathcal{A}|^{8/7} \log\{|\mathcal{B}|n\}=o(n^{1/7})$.
\end{proposition}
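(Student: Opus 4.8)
The plan is to recognise both quantities appearing inside $\Delta(1)$ as squared maximum spurious correlations between a vector that is independent of $X_\mathcal{B}$ and sparse linear combinations of the columns of $X_\mathcal{B}$, and then to invoke Theorem 3.1 of \cite{fan2018} together with a union bound. First I would dispose of the combinatorial structure. Since $|\mathit{A}|\le 1$, a pair $(\mathit{A},\mathit{B})$ contributing to $\Delta(1)$ is either of the form $\mathit{A}=\emptyset$ with $|\mathit{B}|=k$, in which case $R^2(X_\mathit{A},X_\mathit{B})=0$ by convention, or $\mathit{A}=\{a\}$ for some $a\in\mathcal{A}$ with $|\mathit{B}|=k-1$. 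Using the identity $R(v,X_\mathit{B})=\max_{\beta}R(v,X_\mathit{B}\beta)$ from Section \ref{notation} and the monotonicity of the multiple correlation in $\mathit{B}$, I would write
\[
\Delta(1)=\max\Bigl\{\max_{\mathit{B}\subseteq\mathcal{B},\,|\mathit{B}|\le k}R^2(Y,X_\mathit{B}),\ \max_{a\in\mathcal{A}}\ \max_{\mathit{B}\subseteq\mathcal{B},\,|\mathit{B}|\le k-1}R^2(x_a,X_\mathit{B})\Bigr\},
\]
so that each inner maximum equals a squared $\ell_0$-constrained maximum spurious correlation $\max_{\|\beta\|_0\le k}R^2(v,X_\mathcal{B}\beta)$ with $v=Y$ or $v=x_a$.

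The second step is to bound a single such maximum. For fixed $v\in\{Y\}\cup\{x_a:a\in\mathcal{A}\}$, the stated assumptions give exactly the hypotheses of \cite{fan2018}: $v$ is a centred sub-Gaussian vector independent of $X_\mathcal{B}$, and the rows of $X_\mathcal{B}$ are independent with independent, centred, uniformly sub-Gaussian entries (so in particular the population Gram matrix is diagonal, which simplifies verification of their design conditions). Theorem 3.1 of \cite{fan2018} then yields $\max_{\|\beta\|_0\le k}R^2(v,X_\mathcal{B}\beta)=O_P\{k\log(|\mathcal{B}|)/n\}$, the underlying Gaussian approximation being valid provided $\log(|\mathcal{B}|)=o(n^{1/7})$. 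Applied once with $v=Y$, this controls the first inner maximum.

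For the family $\{x_a:a\in\mathcal{A}\}$ I would need the bound uniformly in $a$, which requires a tail version of the previous step so that a union bound over the $|\mathcal{A}|$ choices is available. Taking the deviation level in the tail bound to absorb an extra $\log|\mathcal{A}|$, the union bound costs an additive $\log|\mathcal{A}|/n$ in the rate; since $\log|\mathcal{A}|\le|\mathcal{A}|=o\{\log(|\mathcal{B}|)\}$ this is dominated by the leading $\log(|\mathcal{B}|)/n$ term. Combining the two contributions and using $k=O(1)$ gives $\Delta(1)=O_P\{\log(|\mathcal{B}|)/n\}$, which is $o_P(1)$ under $\log(|\mathcal{B}|)=o(n)$, as claimed.

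The main obstacle is not the reduction but the uniform control of the Gaussian-approximation error of Theorem 3.1 across the union bound. The bound of \cite{fan2018} approximates the distribution of the maximum spurious correlation up to an additive error that decays only polynomially in $n$; this error cannot be diminished by enlarging the deviation level, so summing it over the $|\mathcal{A}|$ applications is delicate and is precisely what produces the side condition $|\mathcal{A}|^{8/7}\log\{|\mathcal{B}|n\}=o(n^{1/7})$. Carefully tracking how this approximation error degrades under the union, and confirming that the diagonal sub-Gaussian design of the proposition meets the regularity conditions of \cite{fan2018} uniformly, is where the real work lies, the remaining estimates being routine.
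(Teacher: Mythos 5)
Your overall route is the paper's: the same case analysis on $|\mathit{A}|\in\{0,1\}$ (empty $\mathit{A}$ with $|\mathit{B}|=k$, singleton $\mathit{A}$ with $|\mathit{B}|=k-1$), the same appeal to Theorem 3.1 of \cite{fan2018} for a single maximum spurious correlation, and a union bound over the $|\mathcal{A}|$ choices of $a$. You also correctly identify that the Berry--Esseen-type error $n^{-1/8}\{\max(\log|\mathcal{B}|,\log n)\}^{7/8}$ in that theorem is an irreducible additive term whose multiplication by $|\mathcal{A}|$ is what produces the condition $|\mathcal{A}|^{8/7}\log\{|\mathcal{B}|n\}=o(n^{1/7})$.

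The gap is in your treatment of the tail of the limiting law. You assert that the deviation level can be enlarged to absorb an extra $\log|\mathcal{A}|$, so that the union bound over $a\in\mathcal{A}$ costs only an additive $\log|\mathcal{A}|/n$ in the rate. That device is not available here, for the same reason you yourself give for the Gaussian-approximation error. Once Theorem 3.1 of \cite{fan2018} reduces matters to the surrogate $\widetilde{R}^2=Z^2_{(|\mathcal{B}|)}+\cdots+Z^2_{(|\mathcal{B}|-k+2)}$, the tail $\mathbb{P}(\widetilde{R}^2>M\log|\mathcal{B}|)$ is controlled via the Gumbel limit for the maximum of $|\mathcal{B}|$ independent standard normals, and by \cite{hall1979} the rate of convergence to that limit is only $O\{(\log|\mathcal{B}|)^{-1}\}$. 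This is a second irreducible additive error: however large $M$ is taken, the resulting bound on $\mathbb{P}(\widetilde{R}^2>M\log|\mathcal{B}|)$ does not fall below order $(\log|\mathcal{B}|)^{-1}$, so the union bound contributes $|\mathcal{A}|\times O\{(\log|\mathcal{B}|)^{-1}\}$ rather than $\log|\mathcal{A}|/n$. This is precisely the role of the hypothesis $|\mathcal{A}|=o\{\log(|\mathcal{B}|)\}$, which in your write-up is invoked only incidentally (to note that $\log|\mathcal{A}|$ is small); as stated, your argument would suggest the much weaker requirement $\log|\mathcal{A}|=o\{\log(|\mathcal{B}|)\}$ suffices, which the proof does not support. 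Relatedly, the claim that Theorem 3.1 directly yields $\max_{\|\beta\|_0\le k}R^2(v,X_\mathcal{B}\beta)=O_P\{k\log(|\mathcal{B}|)/n\}$ conceals the extreme-value step (bounding $\widetilde{R}^2$ by $(k-1)Z^2_{(|\mathcal{B}|)}$ and quantifying the Gumbel approximation) that is the source of the term you are missing.
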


When $|\mathcal{A}|=O(1)$ and $|\mathcal{B}|=O(p)$, Proposition \ref{lem:maxSpurCorr} shows that $\Delta(1)=o_P(n^{-6/7})=o_P(1)$ as long as $\log p=o(n^{1/7})$. When this assumption is violated, it is possible for noise variables to be retained instead of signal variables. This is discussed further in Section \ref{problems} and some of the modifications outlined in Section \ref{improvement} are designed to mitigate the issue.

Our formulation is deliberately constructed so that when $\mathcal{A}$ includes all signal variables, Cox reduction offers no obvious advantage over marginal screening as far as retention of signal variables is concerned. Limitations of Cox reduction may then be identified and potential improvements proposed. 

\subsection{Preliminary insights}\label{sec:prelim}

For a set $\mathit{K}\subset \{1, \dots ,p\}$ of size $k$ indexing variables in a given $k$-dimensional regression, suppose Cox reduction retains variables according to the Wald statistic
\[
T_{Y:\mathit{K}}=\sigma^{-1}D_{\mathit{K}}^{-1/2}\hat{\theta}_{Y:\mathit{K}}
\]
where $D_{\mathit{K}}$ is a diagonal matrix with entries given by the diagonal entries of $(X_{\mathit{K}}^TX_{\mathit{K}})^{-1}$ and $\hat{\theta}_{Y:\mathit{K}}$ was defined in Section \ref{secMS}. Lemma S2 
in \cite{lewisSupp} shows that the entry of $T_{Y:\mathit{K}}$ corresponding to variable index $e \in \mathit{K}$ is
\begin{eqnarray}\sigma^{-1}\|Y\|_2 R(Y, (I-P_{\mathit{K}_{-e}})x_e)\label{eq:tStatResultText}
\end{eqnarray}
and so the relative sizes of the Wald statistics depend on the sample correlation between the response variable $Y$ and the projected covariates $(I-P_{\mathit{K}_{-e}})x_e$.  If a variable has a real effect on the response, this correlation will be large for many companion sets $\mathit{K}_{-e}$. Cox reduction is designed to exploit this. For conciseness, correlation in the present section means sample correlation unless indicated otherwise.

For a more precise analysis, let $\mathit{A} \subseteq \mathcal{A}$ and $\mathit{B}\subseteq\mathcal{B}$ index the variables appearing in a given traversal of the hypercube with $|\mathit{A}\cup \mathit{B}|=k$. Let $T_{Y:\mathit{A}.\mathit{B}}$ be the entries of $T_{Y: \mathit{K}}$ corresponding to $\mathit{A}$ when $\mathit{K}=\mathit{A}\cup \mathit{B}$. 
The following result allows us to characterise the limiting behaviour of the statistics $T_{Y:\mathit{A}.\mathit{B}}$ and $T_{Y:\mathit{B}.\mathit{A}}$ under Conditions \ref{ch2CRCond} and \ref{ch2CRCond2}.

\begin{proposition}\label{tStatApproxV2_comb}
Suppose $R(X_\mathit{A}, X_{\mathit{B}})\leq 1-c$ for some $c>0$. Then there exists $C>0$, depending only on $c$, such that 
\begin{eqnarray*}T_{Y:\mathit{A}.\mathit{B}}&=&T_{Y:\mathit{A}}+\delta_{\mathit{A}.\mathit{B}}\\
T_{Y:\mathit{B}.\mathit{A}}&=&T_{(I-P_\mathit{A})Y:\mathit{B}}+\delta_{\mathit{B}.\mathit{A}}^{(2)}\\
&=&\delta_{\mathit{B}.\mathit{A}}^{(1)}+\delta_{\mathit{B}.\mathit{A}}^{(2)}
\end{eqnarray*}
where 
\begin{eqnarray*}
\|\delta_{\mathit{A}.\mathit{B}}\|_{\infty}&\leq &C\sigma^{-1}\|Y\|\{R(X_\mathit{A}, X_B)+R(Y, X_\mathit{B})\},\\
\|\delta_{\mathit{B}.\mathit{A}}^{(1)}\|_{\infty}&\leq &\sigma^{-1}\|Y\|\{R(X_\mathit{A}, X_B)+R(Y, X_\mathit{B})\}\\
\|\delta_{\mathit{B}.\mathit{A}}^{(2)}\|_{\infty}&\leq & C\sigma^{-1}\|(I-P_\mathit{A})Y\|_2R(X_\mathit{A}, X_\mathit{B})\}.
\end{eqnarray*}
\end{proposition}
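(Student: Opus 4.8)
The plan is to analyse each coordinate of the four statistics separately through the identity \eqref{eq:tStatResultText} of Lemma \ref{tStatResult}, which writes every Wald statistic as $\sigma^{-1}$ times a ratio $Y^T(I-P_{\mathit{K}_{-e}})x_e/\|(I-P_{\mathit{K}_{-e}})x_e\|_2$. The whole argument then rests on one standard projection identity for a sum of subspaces: for any $a\in \mathit{A}$, $P_{\mathit{A}_{-a}\cup \mathit{B}}=P_{\mathit{A}_{-a}}+P_{(I-P_{\mathit{A}_{-a}})X_\mathit{B}}$, so that with $\tilde x_a=(I-P_{\mathit{A}_{-a}})x_a$ and $\tilde X_\mathit{B}=(I-P_{\mathit{A}_{-a}})X_\mathit{B}$ one has $(I-P_{\mathit{A}_{-a}\cup \mathit{B}})x_a=(I-P_{\tilde X_\mathit{B}})\tilde x_a$. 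This exhibits the $a$-th coordinate of $T_{Y:\mathit{A}.\mathit{B}}$ as the very ratio defining the $a$-th coordinate of $T_{Y:\mathit{A}}$, but with $\tilde x_a$ replaced by its further residual against the cross-block directions $\tilde X_\mathit{B}$; the discrepancy $\delta_{\mathit{A}.\mathit{B}}$ is therefore driven entirely by the perturbation $P_{\tilde X_\mathit{B}}\tilde x_a$, which is small when $R(X_\mathit{A},X_\mathit{B})$ is small. The same identity with $\mathit{A}$ and $\mathit{B}$ interchanged handles $T_{Y:\mathit{B}.\mathit{A}}$, and after noting that $(I-P_{\mathit{A}\cup \mathit{B}_{-b}})x_b$ is orthogonal to $\mathrm{span}(X_\mathit{A})$ one may replace $Y$ by $(I-P_\mathit{A})Y$ in its numerator, which is exactly what produces the reference statistic $T_{(I-P_\mathit{A})Y:\mathit{B}}=\delta^{(1)}_{\mathit{B}.\mathit{A}}$.

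For the $\mathit{A}$-coordinates I would write the difference of the two ratios as (i) a normalisation term $Y^T\tilde x_a(\|(I-P_{\tilde X_\mathit{B}})\tilde x_a\|_2^{-1}-\|\tilde x_a\|_2^{-1})$ and (ii) a numerator correction $-Y^TP_{\tilde X_\mathit{B}}\tilde x_a/\|(I-P_{\tilde X_\mathit{B}})\tilde x_a\|_2$. Term (i) is controlled using $\|(I-P_{\tilde X_\mathit{B}})\tilde x_a\|_2=\|\tilde x_a\|_2\sqrt{1-R(\tilde x_a,\tilde X_\mathit{B})^2}$ together with $1-\sqrt{1-\rho^2}\le\rho^2$, contributing at most order $\sigma^{-1}\|Y\|_2R(X_\mathit{A},X_\mathit{B})$. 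Term (ii) is handled by Cauchy--Schwarz, $|Y^TP_{\tilde X_\mathit{B}}\tilde x_a|\le\|P_{\tilde X_\mathit{B}}Y\|_2\|P_{\tilde X_\mathit{B}}\tilde x_a\|_2$, using $\|P_{\tilde X_\mathit{B}}\tilde x_a\|_2=R(\tilde x_a,\tilde X_\mathit{B})\|\tilde x_a\|_2$ and the estimate $\|P_{\tilde X_\mathit{B}}Y\|_2\le C\{R(Y,X_\mathit{B})+R(X_\mathit{A},X_\mathit{B})\}\|Y\|_2$, obtained by splitting $((I-P_{\mathit{A}_{-a}})Y)^TX_\mathit{B}\nu$ into a piece bounded through $\|P_\mathit{B}Y\|_2=R(Y,X_\mathit{B})\|Y\|_2$ and a piece bounded through $\|P_\mathit{A}P_\mathit{B}\|_2=R(X_\mathit{A},X_\mathit{B})$; this is where the covariate--response correlation $R(Y,X_\mathit{B})$ enters the stated bound. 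The $\delta^{(1)}_{\mathit{B}.\mathit{A}}$ bound is the cleanest and explains the coefficient one: writing $s=(I-P_{\mathit{B}_{-b}})x_b\in\mathrm{span}(X_\mathit{B})$ and splitting $((I-P_\mathit{A})Y)^Ts=Y^Ts-Y^TP_\mathit{A}s$, the two pieces are bounded by $R(Y,X_\mathit{B})\|Y\|_2\|s\|_2$ and $R(X_\mathit{A},X_\mathit{B})\|Y\|_2\|s\|_2$ respectively via $P_\mathit{B}s=s$ and $\|P_\mathit{A}P_\mathit{B}\|_2=R(X_\mathit{A},X_\mathit{B})$. Finally $\delta^{(2)}_{\mathit{B}.\mathit{A}}$ is purely geometric: both unit residual directions are deterministic, the coordinate is $\sigma^{-1}(I-P_\mathit{A})Y$ paired with a difference of unit vectors, and the bound reduces to showing that difference is $O(R(X_\mathit{A},X_\mathit{B}))$.

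The crux -- and the step I expect to be the main obstacle -- is the contraction estimate $R(\tilde x_a,\tilde X_\mathit{B})\le R(X_\mathit{A},X_\mathit{B})$, asserting that residualising a covariate against its within-block companions cannot increase its multiple correlation with the opposite block. I would prove it using $P_\mathit{A}-P_{\mathit{A}_{-a}}=uu^T$ with $u=\tilde x_a/\|\tilde x_a\|_2$, which reduces the claim to the elementary inequality $(p-q)/(m-q)\le\rho^2$ valid whenever $0\le q\le p\le\rho^2m$, with $p=\|P_\mathit{A}X_\mathit{B}\nu\|_2^2$, $q=\|P_{\mathit{A}_{-a}}X_\mathit{B}\nu\|_2^2$, $m=\|X_\mathit{B}\nu\|_2^2$ and $\rho=R(X_\mathit{A},X_\mathit{B})$. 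This estimate is what prevents the residualisation from manufacturing collinearity: combined with the hypothesis $R(X_\mathit{A},X_\mathit{B})\le 1-c$ it yields the uniform lower bound $\|(I-P_{\tilde X_\mathit{B}})\tilde x_a\|_2\ge\sqrt{c}\,\|\tilde x_a\|_2$, so that all the denominators appearing above are controlled and every constant $C$ depends on $c$ alone. The version with $\mathit{A}$ and $\mathit{B}$ exchanged supplies the corresponding control for the $\mathit{B}$-coordinates, completing the argument.
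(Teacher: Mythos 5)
Your proposal is correct and reaches all three bounds, but it implements the key step differently from the paper. The paper's proof routes everything through Lemmas \ref{tStatApproxV2} and \ref{partialToMultiple}: there, the coordinate of $T_{Y:\mathit{A}.\mathit{B}}$ is written via $u_e=(I-P_{\mathit{B}}-Q_e)x_e$ (residualising first against the \emph{opposite} block, then against the within-block companions), and the crucial technical device is Lemma \ref{Qe}, which inverts the Gram matrix by a Neumann/geometric series to produce an operator identity $(I-P_\mathit{B}-Q_e)=(I-P_\mathit{B}-\Delta_e)(I-P_{\mathit{A}_{-a}})$ with $\|\Delta_e\|_2\leq(1+c^{-1})R(X_\mathit{A},X_\mathit{B})$; the numerator and normalisation corrections ($\delta_{e,1}$, $\delta_{e,2}$ in the paper's notation) are then read off from $\Delta_e$. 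You residualise in the opposite order, so that $(I-P_{\mathit{A}_{-a}\cup\mathit{B}})x_a=(I-P_{\tilde X_\mathit{B}})\tilde x_a$ lands directly on the object $\tilde x_a=(I-P_{\mathit{A}_{-a}})x_a$ defining $T_{Y:\mathit{A}}$, and the place of Lemma \ref{Qe} is taken by your contraction estimate $R(\tilde x_a,\tilde X_\mathit{B})\leq R(X_\mathit{A},X_\mathit{B})$, whose Pythagoras proof via $P_\mathit{A}-P_{\mathit{A}_{-a}}=uu^T$ and the elementary inequality $(p-q)/(m-q)\leq\rho^2$ is sound (and the lower bound $\|(I-P_{\tilde X_\mathit{B}})\tilde x_a\|_2\geq\sqrt{c}\,\|\tilde x_a\|_2$ it yields is exactly what keeps all constants depending on $c$ alone). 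Your treatment of $\delta^{(1)}_{\mathit{B}.\mathit{A}}$ via $P_\mathit{B}s=s$ and $\|P_\mathit{A}P_\mathit{B}\|_2=R(X_\mathit{A},X_\mathit{B})$ reproduces Lemma \ref{partialToMultiple} with the stated unit constant, and your unit-vector comparison for $\delta^{(2)}_{\mathit{B}.\mathit{A}}$ matches the second claim of Lemma \ref{tStatApproxV2}. What each approach buys: yours avoids the series inversion entirely and is arguably more transparent (it even shows that the $R(Y,X_\mathit{B})$ term in the $\delta_{\mathit{A}.\mathit{B}}$ bound is dispensable if one bounds $\|P_{\tilde X_\mathit{B}}Y\|_2$ crudely by $\|Y\|_2$), whereas the paper's operator-level identity of Lemma \ref{Qe} is reused verbatim for both the $\mathit{A}$- and $\mathit{B}$-blocks and hands over explicit constants such as $1+c^{-1}$.
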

Proposition \ref{tStatApproxV2_comb} shows that as $R(X_\mathit{A}, X_\mathit{B}) \rightarrow 0$ and $R(Y, X_\mathit{B}) \rightarrow 0$, the entries of the scaled statistic \[\|Y\|^{-1}T_{Y:\mathit{A}.\mathit{B}}\] converge to the corresponding entries of the scaled $t$-statistic obtained by regressing $Y$ on $X_\mathit{A}$ only. In contrast, the entries of $\|Y\|^{-1}T_{Y:\mathit{B}.\mathit{A}}$ converge to the entries of the scaled $t$-statistic obtained by regressing $(I-P_\mathit{A})Y$ on $X_\mathit{B}$. The latter is negligible compared to the magnitudes of the entries of $\|Y\|^{-1}T_{Y:\mathit{A}.\mathit{B}}$. These observations will be of use when analysing each round of reduction in the following sections.

\subsection{First reduction}\label{R1}
In the first round of Cox reduction, variable $e \in \{1, \dots, p\}$ is assessed alongside variables indexed by $\mathit{K}_1$,  $\mathit{K}_2$ and  $\mathit{K}_3$, where $\mathit{K}_1\cap\mathit{K}_2\cap \mathit{K}_3=\{e\}$. Let $\mathcal{E}_{e: \mathit{K}}$ be the event that the absolute entry of $T_{Y:\mathit{K}}$ corresponding to $e \in \mathit{K}$ is among the two largest values in the set 
\[\{|e_j^TT_{Y:\mathit{K}}|\, : \, j \in \{1,\dots,k\}\}.\]
Then the variable indexed by $e$ is retained through the first reduction if at least two of the three events $\mathcal{E}_{e: \mathit{K}_1}$, $\mathcal{E}_{e: \mathit{K}_2}$ and $\mathcal{E}_{e: \mathit{K}_3}$ occur. See Algorithm 2 in Section S1 
of the supplementary material \cite{lewisSupp}.

When the number of pseudo-signal variables is small relative to the dimension $p=k^3$, it is likely that pseudo-signal variables are unaccompanied by other variables from $\mathcal{A}$ in at least 2/3 of the regressions in which they appear. By Proposition \ref{tStatApproxV2_comb}, pseudo-signal variables are retained through the first reduction based on their marginal relationship with the response. The following result establishes that as long as the marginal correlation between the response variable and a given pseudo-signal variable is sufficiently large, Cox reduction retains the pseudo-signal variable with probability close to one, where the randomness comes only from the arrangement of variable indices in the hypercube, the responses and covariates being treated as fixed at their realised values.  

\begin{figure}
\center
\includegraphics[trim={100 240 100 240}, clip, scale=0.6]{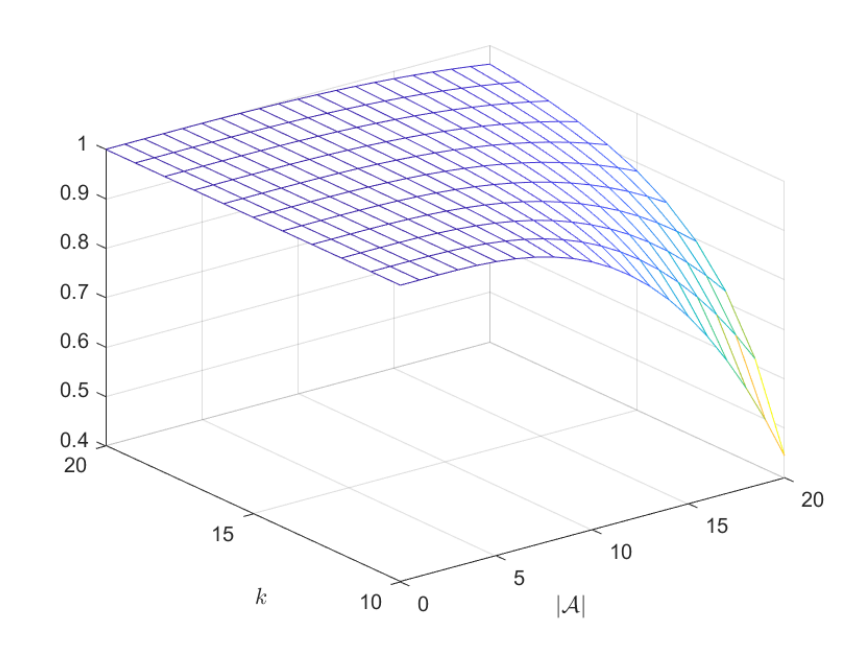}
\caption{Plot of equation \eqref{probR1} for various values of $k$ and $|\mathcal{A}|$. }\label{probRound1}
\end{figure}

\begin{proposition}\label{R1Result}
Suppose Conditions \ref{ch2CRCond} and \ref{ch2CRCond2} hold with $a=1$ when $p,n\rightarrow \infty$ with $k=O(1)$. Then, for $n$ large enough, the covariates indexed by $\mathcal{A}$ are retained after the first round of reduction with probability at least 
\begin{eqnarray}1-\frac{|\mathcal{A}|(|\mathcal{A}|-1)(|\mathcal{A}|-2)(k-1)^2}{(k^3-1)(k^3-2)}.\label{probR1}\end{eqnarray}
\end{proposition}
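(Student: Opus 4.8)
The plan is to factor the argument into a deterministic part, which controls the Wald statistics within a single traversal of the cube through Proposition \ref{tStatApproxV2_comb}, and a combinatorial part, which controls the random placement of the pseudo-signal variables. For the \emph{deterministic step}, fix an arrangement and a fibre indexed by $\mathit{K}=\mathit{A}\cup\mathit{B}$ with $\mathit{A}\subseteq\mathcal{A}$ and $\mathit{B}\subseteq\mathcal{B}$. Because $k=O(1)$, every spurious correlation appearing in Proposition \ref{tStatApproxV2_comb} is bounded by $\sqrt{\Delta(|\mathit{A}|)}$, so the pseudo-noise statistics $T_{Y:\mathit{B}.\mathit{A}}$ are uniformly $o(1)$, while each pseudo-signal statistic $T_{Y:\mathit{A}.\mathit{B}}$ differs from the corresponding entry of $T_{Y:\mathit{A}}$ by $o(1)$. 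If $|\mathit{A}|=1$ this entry equals $\sigma^{-1}\|Y\|_2 R(Y,x_a)$, which is bounded away from zero by the hypothesis on $\min_{a\in\mathcal{A}}|R(Y,x_a)|$, so a lone pseudo-signal variable is the strictly largest statistic and lies among the two largest. If $|\mathit{A}|=2$ both pseudo-signal statistics dominate the vanishing pseudo-noise statistics, so both occupy the top two positions. Hence, for $n$ large enough and uniformly over all $3k^2$ regressions, any fibre containing at most two pseudo-signal variables places every pseudo-signal variable present among its two largest statistics.

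For the \emph{reduction to a combinatorial event}, note that the three fibres through a given cell meet only in that cell. Consequently, if no axis-parallel line of the cube contains three or more pseudo-signal variables, then each pseudo-signal variable has at most one pseudo-signal companion in each of its three fibres; by the deterministic step it is then among the two largest in all three of its regressions, hence in at least two, and so is retained. The event that every line carries at most two pseudo-signal variables is therefore contained in the event that all of $\mathcal{A}$ survives the first reduction, so it suffices to bound the probability of the complement.

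For the \emph{combinatorial step}, under the uniform random arrangement a union bound over the $\binom{|\mathcal{A}|}{3}$ triples of pseudo-signal variables gives
\[ \mathbb{P}\{\text{some line carries}\ge 3\text{ pseudo-signal variables}\}\le \binom{|\mathcal{A}|}{3}\,\mathbb{P}\{\text{a fixed triple is collinear}\}. \]
Conditioning on the cell occupied by one member of a triple, the remaining two must both land on one of the three lines through it, so the triple is collinear with probability $3(k-1)(k-2)/\{(k^3-1)(k^3-2)\}$. Substituting $\binom{|\mathcal{A}|}{3}=|\mathcal{A}|(|\mathcal{A}|-1)(|\mathcal{A}|-2)/6$ and bounding $(k-1)(k-2)/2\le(k-1)^2$ produces exactly \eqref{probR1}.

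The delicate point, which I expect to be the main obstacle, is the $|\mathit{A}|=2$ case of the deterministic step: two pseudo-signal variables sharing a fibre could be strongly mutually correlated, in which case their joint Wald statistics need not be bounded away from zero and might be overtaken by pseudo-noise statistics. This is resolved either by invoking the standing idealization that $\Delta(a)$ is negligible for moderate $a$ together with a lower bound on the conditional signal, or by observing that such pairings are rare in the first round and are in any case absorbed into the combinatorial bound, since forbidding all lines with three or more pseudo-signal variables is merely a sufficient (and deliberately conservative) event. Ensuring that the domination of pseudo-noise by pseudo-signal holds uniformly over all $3k^2$ regressions as $n\to\infty$ is where the quantitative care is required.
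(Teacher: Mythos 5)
There is a genuine gap, and it sits exactly where you flagged it: the $|\mathit{A}|=2$ case of your deterministic step. Your combinatorial event (no axis-parallel line carries three or more pseudo-signal indices) only guarantees that each $a\in\mathcal{A}$ has \emph{at most one} pseudo-signal companion per fibre; in the worst case all three of its fibres contain a companion, so all three of its regressions are $|\mathit{A}|=2$ regressions and your argument must rely on the two-signal deterministic claim. That claim does not follow from the hypotheses. First, the error terms in Proposition \ref{tStatApproxV2_comb} for $|\mathit{A}|=2$ involve $R(X_\mathit{A},X_\mathit{B})$ with $|\mathit{A}|=2$, which is controlled by $\Delta(2)$, not by the assumed $\Delta(1)=o(1)$. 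Second, and more seriously, the leading term $T_{Y:\mathit{A}}$ for $|\mathit{A}|=\{a,a'\}$ has entry $\sigma^{-1}\|Y\|_2\,R\bigl(Y,(I-P_{a'})x_a\bigr)$, a partial quantity on which the hypothesis $\{\min_a|R(Y,x_a)|\}^{-1}=O(1)$ says nothing; since $\mathcal{A}$ by construction contains noise variables correlated with signal variables, $R\bigl(Y,(I-P_{a'})x_a\bigr)$ can be arbitrarily small, which is precisely the failure mode the paper discusses in Section \ref{problems}. Your proposed rescue --- that such pairings are ``absorbed into the combinatorial bound'' --- does not work: pairs sharing a fibre are not excluded by the no-three-collinear event, so nothing absorbs them.

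The paper's proof avoids this entirely by conditioning on a different, more restrictive combinatorial event $\mathcal{G}$: every index of $\mathcal{A}$ is \emph{unaccompanied by any other} index of $\mathcal{A}$ in at least two of its three fibres (Lemma \ref{probRound1Bound}). On $\mathcal{G}$, only the $|\mathit{A}|=1$ analysis is ever needed, for which $\Delta(1)=o(1)$ and the marginal-correlation lower bound suffice, and the complement $\mathcal{G}^c$ is bounded by a union over ordered triples $(a,a_1,a_2)$ with $a_1,a_2$ on two \emph{distinct} fibres through $a$, giving probability at most $6\binom{|\mathcal{A}|}{3}(k-1)^2/\{(k^3-1)(k^3-2)\}$, which is exactly \eqref{probR1}. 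That your collinearity bound reproduces the same expression after the slack inequality $(k-1)(k-2)/2\leq(k-1)^2$ is an algebraic coincidence: the two bad events are genuinely different (three collinear pseudo-signal points do not violate $\mathcal{G}$, and two companions on distinct fibres through $a$ violate $\mathcal{G}$ without creating three collinear points). To repair your argument you would either have to adopt the paper's event, or add hypotheses ($\Delta(2)=o(1)$ plus a uniform lower bound on the relevant partial correlations) that the proposition does not assume.
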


The lower bound on the probability given in Proposition \ref{R1Result} is a lower bound on the probability that pseudo-signal variables appear unaccompanied by other pseudo-signal variables in at least 2/3 regressions in which they appear. This bound is plotted in Figure \ref{probRound1}. When $|\mathcal{A}|$ is of moderate size relative to $p=k^3$, the probability is close to one. In light of this result, highly correlated variables may be paired, with one representative of both, during the first round of reduction. This  serves to reduce the cardinality of the the set $\mathcal{A}$, thus increasing probability \eqref{probR1}.

The result in Proposition \ref{R1Result} is conservative in that it focuses on the case where each pseudo-signal variable is the most significant in at least 2/3 regressions in which it appears. Cox reduction retains a variable if it is among the two most significant. 

The first reduction retains all  pseudo-signal variables whose marginal correlation with the response is sufficiently large and so mimics a conservative version of marginal screening that retains covariates with the largest $\hat{s}$ marginal correlations with the response, for $\hat{s}$ large. The two procedures differ in the set of retained pseudo-noise variables. The second round of Cox reduction, at which point the two procedures diverge more substantially, considers the joint explanatory power of sets of pseudo-signal variables. 

\subsection{Second reduction}
Suppose that the index $e$, retained through the first reduction, is randomly arranged in a $k \times k$ square, where the dimension $k$ may be different from that of the first reduction. Let $\mathit{K}_1$ and $\mathit{K}_2$ be the variables that share a row or column with $e$ in the square and redefine $\mathcal{E}_{e: \mathit{K}}$ to be the event that the entry of $T_{Y:\mathit{K}}$ corresponding to $e \in \mathit{K}$ is significant at level $\alpha$. Then, the second round of reduction retains the variable indexed by $e$ on the event $\mathcal{E}_{e: \mathit{K}_1} \cup \mathcal{E}_{e: \mathit{K}_2}$. See Algorithm 3 in Section S1 
of the supplementary material \cite{lewisSupp}.

In contrast to the first round, pseudo-signal variables are likely to appear together in second-round regressions. Proposition \ref{expecRound} derives the expected number of pseudo-signal variables that share a row or column with a given $a \in \mathcal{A}$ in terms of the dimension of the square $k$ and the size of the set $\mathcal{A}$. For comparison, the analogous calculation for the cube is also provided. Figure \ref{plotExpecR1R2} shows that this expectation is substantially larger in the second round. 

\begin{proposition}\label{expecRound}
Suppose the indices in $\mathcal{A}$ are randomised to positions in a $k\times k $ square. For a given index $a \in \mathcal{A}$, the expected number of indices from $\mathcal{A}_{-a} $ sharing a row or column with $a$ is 
\begin{eqnarray*}
\frac{2(|\mathcal{A}|-1)}{k+1}.\label{eq:expecR2}
\end{eqnarray*}
In contrast, if the indices are randomised to positions of a $k \times k\times k$ cube, the expected number of indices from $\mathcal{A}_{-a}$ sharing a row, column or corridor with $a$ is 
\begin{eqnarray*}
\frac{3(|\mathcal{A}|-1)}{k^2+k+1}.\label{eq:expecR1}
\end{eqnarray*}
\end{proposition}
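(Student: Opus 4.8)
The plan is to use linearity of expectation combined with the exchangeability of the uniform random arrangement. I would first fix attention on the given index $a \in \mathcal{A}$ and note that, because the placement is a uniformly random bijection between indices and cells, the quantity of interest does not depend on which cell $a$ occupies. I may therefore condition on $a$ occupying an arbitrary but fixed cell and simply count the remaining cells that share a line with it; conditionally on this, each other index is equally likely to land in any of the remaining cells, so each $a' \in \mathcal{A}_{-a}$ has the same probability of sharing a line with $a$, and the expected count is $(|\mathcal{A}|-1)$ times that common probability.

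For the square, a fixed cell lies in exactly one row and one column, each containing $k-1$ further cells, and these two lines meet only at the cell itself; hence precisely $2(k-1)$ of the remaining $k^2-1$ cells share a row or column with $a$. Thus $\mathbb{P}\{a' \text{ shares a row or column with } a\} = 2(k-1)/(k^2-1) = 2/(k+1)$, using the factorisation $k^2-1=(k-1)(k+1)$. Summing over the $|\mathcal{A}|-1$ indices in $\mathcal{A}_{-a}$ gives \eqref{eq:expecR2}. The cube case is identical in structure: a fixed cell lies in one row, one column and one tube fibre, each contributing $k-1$ further cells, with the three lines pairwise disjoint apart from the common cell, so $3(k-1)$ of the $k^3-1$ remaining cells share a line with $a$. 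The per-index probability is therefore $3(k-1)/(k^3-1)=3/(k^2+k+1)$, using $k^3-1=(k-1)(k^2+k+1)$, and summation over $\mathcal{A}_{-a}$ yields \eqref{eq:expecR1}.

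Since the argument rests entirely on elementary counting under a uniform arrangement, there is no substantive obstacle. The only points requiring care are to verify that the row, column and (in the cube) tube fibre through a cell intersect solely at that cell, so that the shared-position counts $2(k-1)$ and $3(k-1)$ involve no double-counting, and to observe that the common factor $k-1$ cancels in each ratio to produce the stated denominators $k+1$ and $k^2+k+1$.
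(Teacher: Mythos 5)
Your argument is correct and is essentially identical to the paper's proof, which also applies linearity of expectation to the indicator events $\mathcal{G}_{a,a_0}$ and computes the conditional probability $2(k-1)/(k^2-1)$ for the square and $3(k-1)/(k^3-1)$ for the cube before cancelling the common factor $k-1$. No gaps.
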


\begin{figure}
\center
\begin{subfigure}{0.5\textwidth}
\center
\includegraphics[trim={100 270 100 280}, clip, scale=0.6]{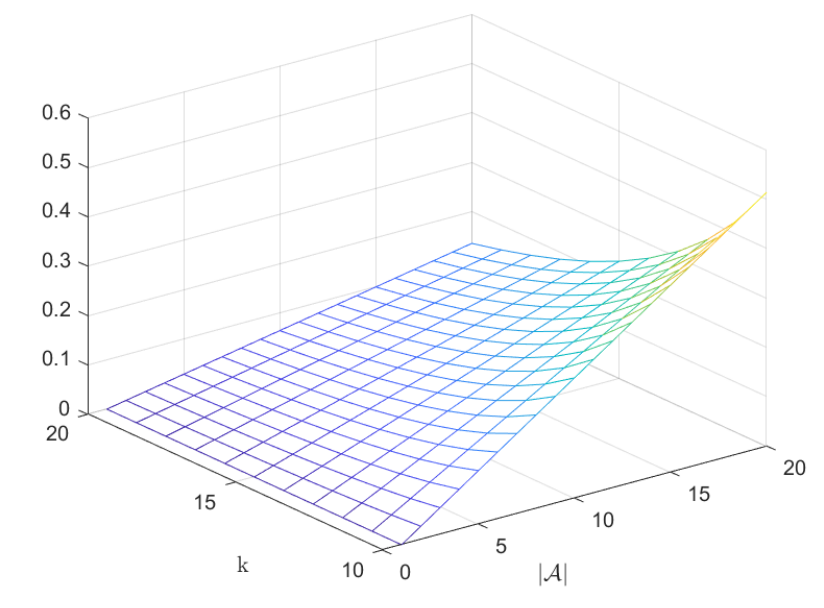}
\caption{ Round 1: $k \times k \times k$ cube}
\end{subfigure}
\hfill
\begin{subfigure}{0.5\textwidth}
\center
\includegraphics[trim={100 270 100 280}, clip, scale=0.6]{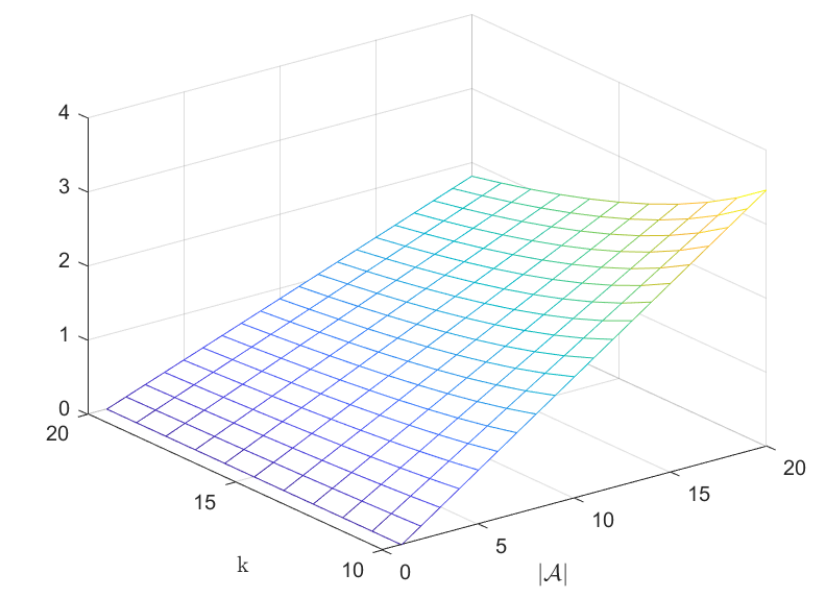}
\caption{Round 2: $k \times k$ square}
\end{subfigure}
\caption{Expected number of indices from $\mathcal{A}_{-a}$ sharing a fibre with $a$ for various values of $k$ and $|\mathcal{A}|$.}\label{plotExpecR1R2}
\end{figure}

Provided Conditions \ref{ch2CRCond} and \ref{ch2CRCond2} hold for $a=k$, Propositions \ref{tStatApproxV2_comb} and \ref{expecRound} imply that pseudo-signal variables are retained through the second reduction based on the significance of statistics of approximate form
\begin{eqnarray}\sigma^{-1}\|Y\|_2R(Y, (I-P_{\mathit{A}_{-a}})x_a)\label{round2stat}\end{eqnarray}
where $\mathit{A}\subseteq\mathcal{A}$ denotes the pseudo-signal variables appearing in a given row or column and $a \in \mathit{A}$. Since the second reduction compares the magnitude of these statistics to a pre-defined threshold, the retention of pseudo-signal variables is approximately independent of all pseudo-noise variables. Provided the dependence between pseudo-signal variables does not make $R(Y, (I-P_{\mathit{A}_{-a}})x_a)$ small for many possible sets $\mathit{A}$ and indices $a$, Cox reduction will retain pseudo-signal variables.

In contrast, any pseudo-noise variables that survived the first reduction are retained through the second based on the magnitudes of statistics of approximate form
\[\sigma^{-1}\|(I-P_\mathit{A})Y\|_2R\{(I-P_\mathit{A})Y, (I-P_{\mathit{B}_{-b}})x_b\}\]
where $\mathit{B}\subseteq\mathcal{B}$ and $\mathit{A}\subseteq\mathcal{A}$ denote the pseudo-noise and pseudo-signal variables appearing in the same regression as $b \in \mathit{B}$. These statistics are bounded in magnitude by 
\begin{eqnarray*}
 \sigma^{-1}\|Y\|_2\Delta(|\mathit{A}|)^{1/2}
\end{eqnarray*}
up to a constant and so their relative sizes compared to (\ref{round2stat}) are asymptotically small. Cox reduction is not expected to prioritise these indices for retention. Proposition \ref{resultR2} formalises these observations, showing that when the sample size is large enough, pseudo-signal variables included in a set $\mathcal{A}_1\subseteq \mathcal{A}$ are ranked more highly by Cox reduction than pseudo-noise variables. Provided the threshold for retention is chosen appropriately, Cox reduction will succesfully distinguish between these two types of variables. 

\begin{proposition}\label{resultR2}
Suppose Conditions \ref{ch2CRCond} and \ref{ch2CRCond2} hold with $a=k$.  Define $\mathcal{A}_1\subseteq\mathcal{A}$ to satisfy
\[\max_{a \in \mathcal{A}_1}\max_{\mathit{A} \subseteq \mathcal{A}\backslash\{a\},\, |A|\leq k-1}\{|R(Y, (I-P_{\mathit{A}})x_a)|\}^{-1}=O(1).\]
Then there exist $N,\alpha>0$ such that when $n\geq N$ the indices in $\mathcal{A}_1$ are retained by the second reduction but those in $\mathcal{B}$ that survived the first reduction, are not. This holds irrespective of the random arrangement of indices in the hypercube.
\end{proposition}

It is possible that $\mathcal{A}_1$ is empty or does not include the full set of pseudo-signal variables. In this case, the variables in $\mathcal{A}\backslash\mathcal{A}_1$ may be retained by Cox reduction, although this will depend on the random arrangement of the indices in the hypercube.

The second reduction requires any apparent explanatory power to persist when the variable in question is accompanied by relatively strong companion variables. Its intended purpose was to differentiate variables in $\mathcal{S}$ from those in $\mathcal{S}^{c}\cap \mathcal{A}$, when $\mathcal{S}\subseteq \mathcal{A}$. Consider a regression of $Y$ on $X_\mathit{A}$ where $\mathit{A}\subseteq \mathcal{A}$ contains signal and noise indices. Lemma S3 
in \cite{lewisSupp} shows that the entry of $\sigma T_{Y:A}$ corresponding to $a \in \mathit{A}$  is given by
\begin{eqnarray}\Delta_{1}+\Delta_{2}\label{eq:tStat_NS}
\end{eqnarray}
where 
\begin{eqnarray*}
\Delta_1&=&\sqrt{1-R^2(x_a, X_{\mathit{A}_{-a}})}\theta^0_a\|x_a\|_2\\
\Delta_2&=&\|Y-X_\mathit{A}\theta^0_\mathit{A}\|_2R(Y-X_\mathit{A}\theta^0_\mathit{A}, (I-P_{\mathit{A}_{-a}})x_a).
\end{eqnarray*}
The term $\Delta_1$ describes the amount of the true signal corresponding to each variable that is recovered, and is equal to zero when $a$ indexes a noise variable. The term $\Delta_2$ characterises how well the variables in $A$ are able to recover the omitted signal $Y-X_\mathit{A}\theta^0_\mathit{A}$ and is bounded in magnitude by $\|Y-X_\mathit{A}\theta^0_\mathit{A}\|_2$. A regression of $Y$ on $X_\mathit{A}$ is able to distinguish between the noise and signal variables in $\mathit{A}$ if the term $\min_{a \in \mathcal{S}\cap \mathit{A}}|\Delta_1|$ is large in comparison to $\max_{a \in \mathit{A}}|\Delta_2|$.
 If $a$ indexes a noise variable, then $\Delta_1=0$ and $|\Delta_2|$ is minimised (up to terms involving $\epsilon$) when $\mathit{A}$ includes the signal variables that are not orthogonal to $x_a$. In contrast, $\Delta_1$ is largest for $a \in \mathcal{S} \cap \mathit{A}$ when colinearity between signal variables and other variables in $\mathit{A}$ is absent. Whether or not Cox reduction accurately differentiates variables in $\mathcal{S}$ from those in $\mathcal{S}^{c}\cap \mathcal{A}$ when $\mathcal{S}\subseteq \mathcal{A}$ therefore depends on the balance of colinearity among the variables present, and the proportion of signal that they contain.

In practice, the threshold $\alpha$ is chosen so that the set of retained variables is relatively stable across repeated rerandomisation of the indices in the two hypercubes. In view of the observations above, the set of pseudo-noise variables retained due to high spurious correlation $R((I-P_\mathit{A})Y, (I-P_{\mathit{B}_{-b}})x_b)$ is expected to differ upon repeated rerandomisation of the variable indices in the hypercubes. In contrast, provided the dependence between pseudo-signal variables does not cause $R(Y, (I-P_{\mathit{A}_{-a}})x_a)$ to be uniformly small over many sets $\mathit{A}$ and indices $a \in \mathit{A}$, pseudo-signal variables will be retained by Cox reduction irrespective of the arrangement of variable indices. Thus, a relatively stable set of retained variables over repeated rerandomisation suggests presence of variables with genuine explanatory power. If no threshold $\alpha$ results in a stable set of manageable size, this is a warning against the procedure for the data at hand. 
\subsection{Problematic situations}\label{problems}

The analysis above points to situations that may lead Cox reduction to exclude relevant variables. For example, a signal variable with a weak sample correlation with the outcome is relatively unlikely to be retained by Cox reduction. Section \ref{secMS} discusses how such a variable might arise. Most variable selection procedures exclude the possibility that signal variables have weak marginal or partial correlations with the response. See Condition 3 in \cite{fan2008}, and the definition of partial faithfulness or Assumption 4 in \cite{buhlmann2010}.

In the second round of Cox reduction, variables from $\mathcal{A}$ appear together with frequency depending on the dimension of the square. When a signal variable has a weak partial correlation with the response given many different variables from $\mathcal{A}$, it may be declared unimportant in both of its second-round regressions and hence discarded by Cox reduction. Rerandomisation or a more systematic arrangement of the variable indices in the second reduction improve the situation. This is discussed in Sections \ref{improvement} and \ref{systematic}.

Another unfavourable situation, more problematic for marginal screening than for Cox reduction, is when a noise variable that is uncorrelated with all signal variables has a large sample correlation with the response purely by chance. Simulations in \cite{fan2008} pointed to large spurious correlations for large $p$. This was formalised in \cite{fan2018} which established the limiting distribution of the maximum spurious correlation. The distributions of other order statistics were not discussed. Any spurious noise variable would be contained in $\mathcal{A}$ and likely survive the first reduction. Depending on the random arrangement of variables in the square, it may also survive the second reduction. Similarly, noise variables may be retained when multiple correlation coefficients between uncorrelated variables are spuriously large.

In view of the model assessment phase, inclusion of noise variables is less problematic than omission of signal variables provided there are not so many as to make assessment of models practically infeasible. Sample splitting and rerandomisation, discussed in Section \ref{improvement}, reduce the survival probability of  noise variables.

\section{Recommended improvements to Cox reduction}\label{improvement}
\subsection{Alternating subsamples}\label{sampleSplit} 
We propose using a different portion of the sample for each reduction round, breaking the dependence between analyses in the first and second reductions and thereby reducing the probability that irrelevant noise variables appear correlated with the response in both rounds. Let $\mathcal{I}$ be a subset of $\{1, \dots, n\}$. The split-sample version of Cox reduction uses observations indexed by $\mathcal{I}$ for the first reduction and the model assessment phase, and those indexed by $\mathcal{I}^c$ for the second reduction. For more than two rounds, alternation should continue in this way. We recommend including $30-40\%$ of observations in $\mathcal{I}$ for the two-stage procedure, based on the relative difficulty of the stages. 

\subsection{Rerandomisation}\label{secRerandomisation}

A benefit of Cox reduction is the external source of randomness from the arrangement of variable indices in successive hypercubes. This provides a way of internally calibrating the procedure by rerandomising the variable indices. Battey and Cox \cite{battey2018} suggested this as a check on the stability of $\hat{\mathcal{S}}$. An elaboration is to rerandomise repeatedly, resulting in sets $\hat{\mathcal{S}}_1, \hat{\mathcal{S}}_2, \ldots$, and only retain variables in the final set $\hat{\mathcal{S}}$ if they are present most of the time, e.g.~in at least half the rerandomisation outcomes. With this adaptation, Cox reduction is less dependent on the particular set of variables appearing in regressions together, reducing the chances that signal variables are omitted due to small partial correlations or that noise variables are retained due to high spurious correlation.

 A natural and partially refutable criticism is that the procedure entails at least one tuning parameter (from the second-round reduction), increased to two when rerandomisation is introduced. In practice, selection of tuning parameters is guided by stability of the set $\hat{\mathcal{S}}$ and is rarely problematic. If there appears to be no choice of tuning pair that delivers a stable outcome, that is a warning against the procedure. The ability of Cox reduction to reveal its own fragility on the data at hand is an advantage over the alternatives discussed in Section \ref{posReduction}.
\section{Numerical performance}\label{simulations}
Section S2 
of the supplementary material \cite{lewisSupp} shows the numerical performance of Cox reduction and the proposed construction of the model confidence set. Simulated data were generated to compare marginal screening and LASSO penalised regression to Cox reduction in the setting laid out in Section \ref{coxReduction}. Cox reduction and marginal screening both performed well, retaining all signal variables in the comprehensive model and yielding a model confidence set with high simulated coverage probability. In contrast, the undertuned LASSO failed to include all signal variables in the comprehensive model when correlations among variables were high and signal strengths were low. The size of the model confidence set constructed from marginal screening contained more models on average than that obtained from Cox reduction, as expected based on the discussion in Section \ref{posReduction}. The recommended improvements to Cox reduction from Section \ref{improvement} performed favourably.

To test the performance of the procedure in the presence of more complicated dependence structures, an analysis based on real data was performed. These were from the online supplement of \cite{buhlmann2014} and consisted of the logarithm of the expression levels of $p=4088$ genes, measured for $n=71$ observational units. An artificial response was generated as
$Y=\mathbbm{1}_n+X\theta^0+\epsilon$, with $X$ from the genomics data, $\mathbbm{1}_n$ an $n$-dimensional vector of ones, and $\epsilon \sim N(0, 1)$. Four cases were considered to illustrate our theoretical insights by adjusting the set $\mathcal{S}$ of signal variables. In all cases $\theta^0_{\mathcal{S}}=2\mathbbm{1}_{3}$. 

To aid construction and understanding of the examples, we standardized the columns of $X$ to have unit sample standard deviation. This would not be appropriate in a genuine statistical analysis as it destroys the interpretation of the regression coefficients.

In each of 500 Monte Carlo replications an artificial response was generated anew and a confidence set of models was constructed using the most competitive of the reduction procedures from the analysis in Section S2 
of the supplementary material \cite{lewisSupp}, namely marginal screening (MS), Cox reduction with rerandomisation (CR-R) and Cox reduction with rerandomisation and sample splitting (CR-RSS). A set of indices $\mathcal{I}\subset\{1,\ldots,n\}$ of size $|\mathcal{I}|=29$ was drawn at random and fixed across Monte Carlo replications. This was used for model assessment for all procedures. The observations indexed by $\mathcal{I}^c$ were used for reduction for MS and CR-R, while CR-RSS used $\mathcal{I}$ for its first reduction and $\mathcal{I}^c$ for its second. For each approach, we recorded the proportion of times each signal variable was included in $\hat{\mathcal{S}}$, the proportion of times $\hat{\mathcal{S}}$ included the full set of signal variables, the proportion of times the set of signal variables was included in the model confidence set  $\mathcal{M}$ and the size of $\mathcal{M}$.

For a fair comparison, the comprehensive model was taken to consist of 15 variables for all procedures. Thus in the rerandomised versions of Cox reduction we retained in $\hat{\mathcal{S}}$ the 15 variables that were suggested most often, where the tuning parameter in the second reduction was chosen such that stability of the model confidence set was achieved. 
It is possible to construct examples in which a large significance level is needed in the second round in order that the same variables are retained frequently over successive rerandomisation. This is usually indicative of a weak signal to noise ratio and signals caution. Alternatively, there may be a sharp transition in the significance level at which the number of retained variables in all randomisations jumps sharply. Such instability is again a warning of partial fragility.

A favourable situation for both marginal screening and Cox reduction is when the estimated absolute effect appears larger than it is due to strong dependence between signal variables of an appropriate sign. This situation was explored and the results were similar to those reported in the analysis in Section S2 
in \cite{lewisSupp} and are therefore omitted. The aim of the following three examples is to highlight limitations of the methodology in the presence of certain covariate dependencies.

\subsubsection*{Example: effect cancellation}

A signal variable that is marginally uncorrelated in sample with the response variable is unlikely to be retained by marginal screening or Cox reduction. Section \ref{secMS} 
shows how this can happen for signal variables. In particular, it entails fairly strong sample correlation between signal variables of an appropriate sign. In this example, the set of signal variables was chosen to be the variables $\{1852, 3862, 4088\}$ with strong positive and negative correlations given by
\[\text{Corr}(X_{\mathcal{S}}, X_{\mathcal{S}})=\begin{pmatrix}
1.00 & -0.73 & -0.72\\
-0.73 & 1.00 & 0.73\\
-0.72 & 0.73 & 1.00
\end{pmatrix}.\]
Table \ref{real2} displays the results. The situation is not improved by increasing the signal to noise ratio, with both marginal screening and Cox reduction rarely including the first signal variable in the comprehensive model, as expected. If the effect cancellation was only partial and in a suitable range, Cox reduction will sometimes dominate over marginal screening, as discussed in Section \ref{posReduction}. 

\begin{center}
	\scalebox{0.9}{\begin{tabular}{ |c|c  c c| }
\hline
&\multirow{2}{*}{MS}&\multirow{2}{*}{CR-R}&\multirow{2}{*}{CR-RSS}\\
&&&\\
\hline
 $\mathbb{P}(1852 \in \hat{\mathcal{S}})$& 0.00 (0.00) & 0.00 (0.00) &0.00 (0.00) \\
$\mathbb{P}(3862 \in \hat{\mathcal{S}})$& 1.00 (0.00)  & 1.00 (0.06) & 1.00 (0.00) \\
$\mathbb{P}(4088 \in \hat{\mathcal{S}})$&1.00 (0.00) & 1.00 (0.00) & 0.99 (0.08) \\
\hline
 $\mathbb{P}(\mathcal{S} \subseteq \hat{\mathcal{S}})$& 0.00 (0.00) &0.00 (0.00)& 0.00 (0.00) \\
$|\hat{\mathcal{S}}|$&15 &15   & 15 \\
\hline
$\mathbb{P}(\mathcal{S} \in \mathcal{M})$& 0.00 (0.00) & 0.00 (0.00) & 0.00 (0.00) \\
$\mathbb{E}(|\mathcal{M}\backslash \mathcal{S}|)$&4280 (635)  & 2907 (1437)  & 3273 (1429)\\
\hline
\end{tabular} }
\captionof{table}{Simulation results using real data as the design matrix. The set of signal variables is $\mathcal{S}=\{1852,3862,4088\}$, $\hat{\mathcal{S}}$ is the comprehensive model, $\mathcal{M}$ is the set of models of size at most $s^\#=5$ that are significant at level $\alpha=0.01$. Empirical standard errors are given in brackets. }\label{real2}
\end{center}

\subsubsection*{Example: noise variables with stronger marginal correlations than signal}

A second example illustrates a situation in which Cox reduction outperforms marginal screening. The set of signal variables was taken as $\{5, 1812, 1861\}$ with correlation structure
\[\text{Corr}(X_{\mathcal{S}}, X_{\mathcal{S}})=\begin{pmatrix}
1.00 & 0.18 & 0.32\\
0.18 & 1.00 & 0.28\\
0.32 & 0.28 & 1.00
\end{pmatrix}.\]
Several noise variables were correlated with all three signal variables so that these appeared marginally stronger than the signal variables themselves. The signal variables nevertheless had  sufficient marginal strength to be retained through Cox reduction. Results are in Table \ref{real3}.

\begin{center}
	\scalebox{0.9}{	\begin{tabular}{ |c|c  c c| }
		\hline
		&\multirow{2}{*}{MS}&\multirow{2}{*}{CR-R}&\multirow{2}{*}{CR-RSS}\\
		&&&\\
		\hline
		$\mathbb{P}(5 \in \hat{\mathcal{S}})$& 0.00 (0.00) & 0.53 (0.50) &0.95 (0.21) \\
		$\mathbb{P}(1812 \in \hat{\mathcal{S}})$& 0.00 (0.06)  & 0.91 (0.28) & 1.00 (0.00) \\
		$\mathbb{P}(1861 \in \hat{\mathcal{S}})$&0.19 (0.39) & 0.65 (0.48) & 1.00 (0.05) \\
		\hline
		$\mathbb{P}(\mathcal{S} \subseteq \hat{\mathcal{S}})$& 0.00 (0.00) &0.28 (0.45)& 0.95 (0.22) \\
		$|\hat{\mathcal{S}}|$&15 &15   & 15 \\
		\hline
		$\mathbb{P}(\mathcal{S} \in \mathcal{M})$& 0.00 (0.00) & 0.27 (0.45) & 0.93 (0.23) \\
		$\mathbb{E}(|\mathcal{M}\backslash \mathcal{S}|)$&4854 (484)  & 2065 (1886)  & 555 (739)\\
		\hline
	\end{tabular} }
	\captionof{table}{As Table \ref{real2} with $\mathcal{S}$ replaced by $\{5,1812,1861\}$}\label{real3}
\end{center}

\subsubsection*{Example: weakly correlated signal variables}

This final example illustrates a setting where marginal screening outperforms Cox reduction. The set of signal variables was chosen to be a group of three weakly correlated variables indexed by $\{10,2027, 2923\}$. The sample correlations among these signal variables were
\[\text{Corr}(X_{\mathcal{S}}, X_{\mathcal{S}})=\begin{pmatrix}
1.00 & 0.10 & 0.08\\
0.10 & 1.00 & -0.04\\
0.08 & -0.04 & 1.00
\end{pmatrix}.\]
The results are reported in Table \ref{real4}.

\begin{center}
	\scalebox{0.9}{\begin{tabular}{ |c|c c c| }
\hline
&\multirow{2}{*}{MS}&\multirow{2}{*}{CR-R}&\multirow{2}{*}{CR-RSS}\\
&&&\\
\hline
 $\mathbb{P}(10 \in \hat{\mathcal{S}})$& 0.95 (0.21) & 0.34 (0.47) & 1.00 (0.00) \\
$\mathbb{P}(2027\in \hat{\mathcal{S}})$& 1.00 (0.00)  & 0.03 (0.16)  & 1.00 (0.21) \\
$\mathbb{P}(2923 \in \hat{\mathcal{S}})$&1.00 (0.00) & 0.72 (0.45) & 0.04 (0.10) \\
\hline
 $\mathbb{P}(\mathcal{S} \subseteq \hat{\mathcal{S}})$& 0.95 (0.21) & 0.00 (0.05) & 0.04 (0.20) \\
$|\hat{\mathcal{S}}|$&15& 15    &  15 \\
\hline
$\mathbb{P}(\mathcal{S} \in \mathcal{M})$& 0.94 (0.23) & 0.00 (0.05) & 0.04 (0.20) \\
$\mathbb{E}(|\mathcal{M}\backslash \mathcal{S}|)$&525 (935)  & 3653 (1615) & 732 (647)\\
\hline
\end{tabular} }
\captionof{table}{As Table \ref{real2} with $\mathcal{S}$ replaced by $\{10,2027,2923\}$}\label{real4}
\end{center}

Marginal screening successfully included all signal variables in the comprehensive model and produced a model confidence set of moderate size on average over Monte Carlo replicates. Cox reduction performed less well, with at least one signal variable being frequently omitted from $\hat{\mathcal{S}}$. In these rerandomised versions of Cox reduction, sample splitting increased the probability that the first two signal variables were retained but decreased the survival probability of the third. 

In all three examples, the omission of signal variables at the reduction stage does not affect the usefulness of the model confidence set for identifying models that fit the data equally well. 

\section{Further discussion}\label{discussion}
\subsection{Bootstrap confidence sets of models}\label{lassoRefute}
A natural attempt to construct confidence sets of models uses the LASSO or similar on $B$ bootstrap samples, obtained by sampling $n$ observations with replacement. This suggestion can be refuted in the light of \cite{SuBogdanCandes2017,Su2018}.

Let the true positive rate be the proportion of signal variables selected by the LASSO and the false discovery rate be the proportion of selected variables that are noise variables. \cite{SuBogdanCandes2017} showed that even under idealised conditions with large signal strengths and uncorrelated variables, the false discovery rate is lower bounded by a function of the true positive rate with probability one. Further, \cite{Su2018} shows that the first noise variable is selected earlier on the LASSO path than the final signal variable. Hence, each model either includes the full set of signal variables contaminated by noise variables, or does not include the full set of signal variables. These conclusions hold with probability tending to one in the notional double asymptotic regime in which $p$ grows with $n$.
As a result, with probability close to one, the true model is never selected and a model confidence set constructed in this way has coverage probability close to zero.
\subsection{Cox reduction with unknown variance}\label{unknownVar}
An unknown error variance $\sigma$ can either be estimated prior to Cox reduction based on all $p$ covariates or estimated anew in each of the regressions that are run. The former approach results in an unbiased estimate, whilst the latter typically produces biased estimates from regressions omitting signal variables. Nevertheless, the latter is preferable as a severely biased estimate indicates that the joint explanatory power of covariates appearing in a regression together is weak, making it less likely that these covariates are retained through the second round of Cox reduction. The associated Wald statistic is
\begin{eqnarray*}
	T_{Y:K}=\hat{\sigma}_{\mathit{K}}^{-1} D_{\mathit{K}}^{-1/2}\hat{\theta}_{Y:\mathit{K}}, \quad \quad
	\hat{\sigma}^2_{\mathit{K}}=\frac{\|Y-X_{\mathit{K}} \hat{\theta}_{Y:\mathit{K}}\|_2^2}{n-|\mathit{K}|}
\end{eqnarray*}
where $D_{\mathit{K}}$ is a diagonal matrix with entries given by the diagonal entries of $(X_{\mathit{K}}^TX_{\mathit{K}})^{-1}$. This choice does not affect the first-round reduction, which is based on the relative significance of variables in a given regression. Its effect in the second round is to de-emphasise covariates appearing in regressions with a weak signal.

If it is instead decided to estimate $\sigma$ prior to Cox reduction, suitable estimators are in \cite{FGH2012}, \cite{stadler2010} and \cite{dicker2014}. See \cite{reid2016} for a numerical comparison of their performance.  As the relative significance of covariates are unaffected by estimation of $\sigma$, an equivalent approach sets $\sigma$ to be an arbitrary value, adjusting the second-round significance level to give a stable comprehensive model of manageable size.

\subsection{Systematic arrangement of variables}\label{systematic}
Expression (\ref{eq:tStat_NS}) shows that differentiation between Cox reduction and marginal screening can be achieved by forcing second-round regressions to contain correlated signal and noise variables from $\mathcal{A}$. This suggests a more systematic arrangement of variables in the square in which indices $\mathit{E} $ and $\mathit{F}$ for which $|R(x_\mathit{E}, x_\mathit{F})|\gg0$ appear together in rows or columns. Further analysis is needed to ascertain any optimal arrangement of variables in the second round to maximise the probability of retaining signal variables.
\subsection{Binary responses}\label{secBinary}

Scheff{\'e} \cite{scheffe1959} emphasised the approximate validity of inference based on a notional linear model, owing to the randomisation. Lewis and Battey \cite{lewis2022} formalise this intuition beyond the context of Cox reduction when the outcomes are generated from a linear logistic model. Generative binary models, when fitted by maximum likelihood, are problematic in the second stage of Cox reduction, where many combinations of strong variables form a separating hyperplane, that is, produce no classification errors within sample. Ordinary least squares fitting overcomes the difficulties.

\subsection{Prediction}\label{prediction}

The argument for confidence sets of models is somewhat weakened when prediction is the primary goal. However, stability of the predictor over time and in different contexts is important and more likely with a model that has an underlying interpretation as well as immediate predictive success. By consideration of prediction intervals for each model in $\mathcal{M}$, we account both for model uncertainty and statistical uncertainty. Overlapping prediction intervals are reassuring, while those differing considerably point to instability of the predictor under alternative circumstances. See Section S3 of the supplementary material \cite{lewisSupp} for an example.

\bigskip

\noindent \textbf{Acknowledgements.} The work was partially supported by a UK Engineering and Physical Sciences Research Fellowship (to H.S.B). We are grateful to the anonymous referees and Associate Editors for their valuable suggestions.

\medskip

\noindent \textbf{Source code.} Source code for implementing Cox reduction and constructing confidence sets of models is available from \href{www.ma.imperial.ac.uk/~hbattey/softwareCube.html}{www.ma.imperial.ac.uk/$\sim$\hspace{-0.1em}hbattey/softwareCube.html}. The more user-interfaced R package \texttt{HCmodelSets} \cite{hhh} is accompanied by a detailed guide to usage \cite{Hoeltgebaum}. Source code implementing some of the refinements discussed in the present work, such as those used in the simulations in Section S2.1 of the supplementary material \cite{lewisSupp}, is available from \href{https://github.com/rm-lewis}{https://github.com/rm-lewis}.

\medskip 
\noindent \textbf{Supplementary Material. }The supplement provides pseudo-code for the procedure, simulation results and proofs of key results.




\end{document}